\newtheorem{theorem}{Theorem}
\newtheorem{lemma}[theorem]{Lemma}
\newtheorem{example}[theorem]{Example}
\newtheorem{corollary}[theorem]{Corollary}
\newtheorem{proposition}[theorem]{Proposition}
\newtheorem{definition}[theorem]{Definition}
\newtheorem{remark}[theorem]{Remark}
\def\F{\mathbb{F}}
\def\Tr{\text{\rm Tr}}
\def\wt{\text{\rm wt}}
\def\d{\textrm{d}}
\def\x{\textrm{x}}
\def\D{\textrm{D}}
\def\c{\textrm{c}}
\def\Prj{\textrm{Prj}}
\def\a{\textrm{a}}
\numberwithin{equation}{section}
\numberwithin{theorem}{section}
\begin{document}

\title[Weight distributions and weight hierarchies ]{Weight distributions and weight hierarchies of two classes of binary linear codes}
\markright{Weight hierarchies and weight distributions}

\author{Fei Li}
\address{Faculty of School of Statistics and Applied Mathematics,
Anhui University of Finance and Economics, Bengbu,  Anhui Province, 233030, China}
\email{cczxlf@163.com; 120110029@aufe.edu.cn}

\author{Xiumei Li}
\address{School of Mathematical Sciences, Qufu Normal University, Qufu Shandong, 273165, China}
\email{lxiumei2013@qfnu.edu.cn}

\subjclass[2010]{94B05, 11T71}

\keywords{Weight distribution, Weight hierarchy, Linear code, Generalized Hamming weight}

\begin{abstract}
Linear codes
with a few weights can be applied to communication, consumer electronics and data storage system. In addition, the weight hierarchy of linear codes has many applications such as
 on the type II wire-tap channel, dealing
with $t$-resilient functions and trellis or branch complexity of
linear codes and so on. In this paper,
 we first present a formula for computing the weight hierarchies
of linear codes constructed by the generalized method of defining sets.
Then, we construct two classes of binary linear codes with a few weights and
determine their weight distributions and weight hierarchies completely.
Some codes of them can be used in secret sharing schemes.

\end{abstract}

\maketitle

\section{Introduction}
\label{intro}
For a prime number $p$ and a positive integer $s$, let $ \mathbb{F}_{p^s} $ be the finite field with $ p^{s} $
elements and $\mathbb{F}_{p^s}^{*}$ be its multiplicative group.

An $[n,k,d]$ $p$-ary linear code $C$ is a $k$-dimensional subspace of $ \mathbb{F}_{p}^{n} $ with minimum (Hamming) distance $d$.
For $i\in\{1,2,\cdots,n\}$, denote by $A_{i}$ the number of codewords  in $C$ with Hamming weight $i$. The sequence $(1,A_{1},\cdots,A_{n})$
is called the weight distribution of $C$ and the polynomial $1+A_{1}x+A_{2}x^{2}+\cdots+A_{n}x^{n}$ is called the weight enumerator of $C$.
If the sequence $(A_{1},\cdots,A_{n})$ has $t$ nonzero $A_{i}$ with $ i=1,2,\cdots n$, then the code $C$ is called a $t$-weight code.
In coding theory, the weight distribution of linear codes is a classical research topic and attracts much attention. Furthermore,
linear codes with a few weights have important applications in authentication codes \cite{8DH07},
association schemes \cite{4CG84}, secret sharing \cite{21YD06} and strongly regular graphs \cite{5CK86}.

Motivated by applications to cryptography, generalized Hamming weight of linear codes was introduced in 1970s \cite{HKM77,K78}. For linear code $C$, denote by
$ [C,r]_{p} $ the set of all its $\mathbb{F}_{p}$-vector subspaces with dimension $r$.
For $ V \in [C,r]_{p}$, define $ \textrm{Supp}(V)=\cup_{c\in V}\textrm{Supp}(c)$, where $\textrm{Supp}(c)$ is the set of coordinates where $c$ is nonzero, that is,
$$ \textrm{Supp}(V)=\{i:1\leq i\leq n, c_i\neq 0 \ \ \textrm{for some $c=(c_{1}, c_{2}, \cdots , c_{n})\in V$}\}.$$

\begin{definition}
Let $C$ be an $[n, k, d]$ linear code over $\F_p$. For $1 \leq r \leq k$,
$$
d_{r}(C)=\min\{|\textrm{Supp}(V)|:V\in [C,r]_{p}\}
$$
is called the $r$-th generalized Hamming weight (GHW) of $C$ and $\{d_r(C) : 1 \leq r \leq k\}$ is
called the weight hierarchy of $C$.
\end{definition}

GHW can be thought of an extension of the minimum distance $d = d_1(C)$ and has become an important research
object in coding theory after Wei's paper \cite{20WJ91} in 1991.
A detailed overview on the results of GHW up to 1995 was given in \cite{19TV95}.
The weight hierarchy of linear codes has many applications such as
completely characterizing the performance of the code on the type II wire-tap channel, dealing
with $t$-resilient functions and trellis or branch complexity of
linear codes and so on \cite{9HK92,19TV95,20WJ91}.
Much is known about weight hierarchy for several classes of codes: algebraic geometric codes,
BCH codes, Reed-Muller codes, cyclic codes \cite{1BL14,3CC97,11HP98,14JL97,21XL16,22YL15}.
Recently, there are research results about weight hierarchy of some classes of linear codes \cite{2B19,13JF17,18LF17,19LW19}.

The rest of this paper is organized as follows.
In Sec. 2, we introduce a generalized method of constructing linear code by defining sets and
give a corresponding formula for computing the generalized Hamming weights.
In Sec. 3, we construct binary linear codes with a few weights and determine their weight distributions and weight hierarchies completely.
In Sec. 4, we summarize the paper.

\section{Preliminaries}
\subsection{A generic construction of linear codes}

Ding et al. \cite{6DD14} proposed a generic construction of linear codes as below.
Denote by $\mathrm{Tr}$ the trace function from $\mathbb{F}_{p^s}$ to $\mathbb{F}_p$.
Let $ D= \{d_{1},d_{2},\cdots,d_{n}\}$ be a subset of $\mathbb{F}_{p^s}^{\ast}.$
A $p$-ary linear code of length $n$ is defined as follows:
\begin{eqnarray}\label{defcode1}
         C_{D}=\{\left( \Tr(xd_1), \Tr(xd_2),\ldots, \mathrm{Tr}(xd_{n})\right):x\in \mathbb{F}_{p^{s}}\},
\end{eqnarray}
and $D$ is called the defining set of $C_{D}$. By choosing some proper defining sets, some optimal linear
codes with a few weights can be constructed \cite{5DJ15,25DD15,9DL16,16KY19,LL18,26TX17,19TXF17,23YY17,24ZL16}.

Li et al. generalized the above constructing method and also constructed some linear codes with a few weights \cite{LBY19,LY17}.
The generalized method is as follows.
For a positive integer $e,$ let $ \textrm{D}= \{\d_{1},\d_{2},\cdots,\d_{n}\}$
be contained in $\mathbb{F}_{p^{s}}^{e}\backslash\{(0,0,\cdots,0)\}. $
For $\textrm{u}=(u_{1},u_{2},\cdots,u_{e})$, $\textrm{v}=(v_{1},v_{2},\cdots,v_{e})$,
denote by $\textrm{u}\cdot \textrm{v}$ the ordinary inner product of $\textrm{u}$ and $\textrm{v}$, that is,
$$
\textrm{u}\cdot \textrm{v}=u_{1}v_{1}+u_{2}v_{2}+\cdots+u_{e}v_{e}.
$$

Define
a $p$-ary linear code $ C_{\D} $ with length $ n $ as follows:
\begin{eqnarray}\label{defcode1}
         C_{\D}=\{\left( \mathrm{Tr}(\x\cdot \d_1), \mathrm{Tr}(\x\cdot \d_2),\ldots, \mathrm{Tr}(\x\cdot \d_{n})\right):\x\in \mathbb{F}_{p^{s}}^{e}\}.
\end{eqnarray}
Here, $ \D $ is also called the defining set. Using this method, some classes of linear codes with few weights
were obtained by choosing proper defining sets \cite{1AK17,JL19}.

\subsection{A formula for computing $ d_{r}(C_{\D})$}

For the linear code $ C_{\D}$ in \eqref{defcode1}, a general formula in Proposition \ref{pro:d_r} will be employed to calculate the generalized Hamming weight $ d_{r}(C_{\D})$.
Indeed, the result of Proposition \ref{pro:d_r} can be regarded as a generalization of \cite[Theorem 1]{18LF17}.

\begin{proposition}\label{pro:d_r}
For each $ r $ and $ 1\leq r \leq es$, if the dimension of $ C_{\D} $
is $ es$, then
\begin{equation*}\label{eq:d_r}
d_{r}(C_{\D})= n-\max\big\{|\D \cap H|: H \in [\mathbb{F}_{p^{s}}^{e},es-r]_{p}\big\}.
\end{equation*}
\end{proposition}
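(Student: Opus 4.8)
The plan is to reduce the combinatorial minimization defining $d_r(C_{\D})$ to a duality statement for $\F_p$-subspaces of $\mathbb{F}_{p^s}^e$ equipped with the trace form. First I would exploit the hypothesis $\dim_{\F_p}C_{\D}=es$: it forces the $\F_p$-linear evaluation map
\[
\phi:\mathbb{F}_{p^s}^e\to\F_p^n,\qquad \phi(\x)=\big(\Tr(\x\cdot\d_1),\ldots,\Tr(\x\cdot\d_n)\big),
\]
to be injective, hence an isomorphism of $\F_p$-spaces onto $C_{\D}$. In particular $\phi$ carries each $r$-dimensional subspace $V'\subseteq\mathbb{F}_{p^s}^e$ bijectively onto an $r$-dimensional subspace $V=\phi(V')$ of $C_{\D}$, so that the minimum in the definition of $d_r$ may be taken over the subspaces $V'$ instead of over $V\in[C_{\D},r]_p$.

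Next I would compute $|\textrm{Supp}(V)|$. An index $i$ lies \emph{outside} $\textrm{Supp}(V)$ exactly when $\Tr(\x\cdot\d_i)=0$ for every $\x\in V'$. Introducing the symmetric $\F_p$-bilinear form $B(\x,\textrm{y})=\Tr(\x\cdot\textrm{y})$ on $\mathbb{F}_{p^s}^e$ together with the orthogonal complement $V'^{\perp}=\{\textrm{y}:B(\x,\textrm{y})=0\text{ for all }\x\in V'\}$, this condition reads $\d_i\in V'^{\perp}$. Because the elements of $\D$ are pairwise distinct, the number of such indices equals $|\D\cap V'^{\perp}|$, and therefore $|\textrm{Supp}(V)|=n-|\D\cap V'^{\perp}|$.

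The crux is then a duality/dimension count. I would check that $B$ is nondegenerate over $\F_p$: since the trace form $(a,b)\mapsto\Tr(ab)$ on $\mathbb{F}_{p^s}$ is nondegenerate, for any nonzero $\x$ one finds $\textrm{y}$ supported in a single coordinate with $B(\x,\textrm{y})\neq0$. Nondegeneracy then yields $\dim_{\F_p}V'^{\perp}=es-r$ and makes $V'\mapsto V'^{\perp}$ a bijection from the $r$-dimensional subspaces of $\mathbb{F}_{p^s}^e$ onto $[\mathbb{F}_{p^s}^e,es-r]_p$. Combining this with the support formula, minimizing $|\textrm{Supp}(V)|$ over $r$-dimensional $V'$ is the same as maximizing $|\D\cap H|$ over $H\in[\mathbb{F}_{p^s}^e,es-r]_p$, which gives $d_{r}(C_{\D})=n-\max\{|\D\cap H|:H\in[\mathbb{F}_{p^s}^e,es-r]_p\}$.

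I expect the only point requiring real care—rather than genuine difficulty—to be the nondegeneracy of $B$ and the attendant dimension count $\dim_{\F_p}V'^{\perp}=es-r$; once these are secured the correspondence with subspaces of complementary dimension is automatic. A secondary detail worth recording is that distinctness of the $\d_i$ is exactly what licenses the passage from ``indices $i$ with $\d_i\in V'^{\perp}$'' to the cardinality $|\D\cap V'^{\perp}|$. The argument specializes at $e=1$ to the proof of \cite[Theorem 1]{18LF17}, which is the form in which I would sanity-check it.
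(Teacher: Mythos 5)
Your proof is correct, and its overall architecture matches the paper's: both use injectivity of the evaluation map (forced by $\dim C_{\D}=es$) to pull $r$-dimensional subspaces of $C_{\D}$ back to $r$-dimensional subspaces of $\mathbb{F}_{p^{s}}^{e}$, both identify the coordinates outside the support with the elements of $\D$ lying in the orthogonal complement with respect to the trace form, and both conclude via the dimension-reversing bijection $H\mapsto H^{\perp}$ between $[\mathbb{F}_{p^{s}}^{e},r]_{p}$ and $[\mathbb{F}_{p^{s}}^{e},es-r]_{p}$. The genuine difference lies in how the counting step is carried out. The paper expands the number $N(U_{r})$ of identically-zero coordinates as an additive character sum, $N(U_{r})=\frac{1}{p^{r}}\sum_{i=1}^{n}\sum_{\beta\in H_{r}}\zeta_{p}^{\mathrm{Tr}(\beta\cdot \d_{i})}$, and then collapses it by orthogonality of characters over the subgroup $H_{r}$; you instead observe that the defining condition ``$\mathrm{Tr}(\beta\cdot\d_{i})=0$ for all $\beta$'' is, verbatim, the statement $\d_{i}\in V'^{\perp}$, so no exponential sums are needed at all. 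Your route is more elementary, and it also tightens a point the paper leaves implicit: you actually prove nondegeneracy of the form $B(\x,\textrm{y})=\mathrm{Tr}(\x\cdot\textrm{y})$ (reducing to the nondegenerate trace form on $\mathbb{F}_{p^{s}}$ via vectors supported in a single coordinate), whereas the paper asserts without argument that the Gram matrix $\big(\mathrm{Tr}(\alpha_{i}\cdot\alpha_{j})\big)$ is invertible and deduces the dimension formula from a rank computation. What the paper's character-sum detour buys is continuity of technique: the identical manipulation reappears in Proposition~\ref{pro:d_r:2} and in the weight-distribution computations of Section~3, so the proof of Proposition~\ref{pro:d_r} doubles as a warm-up for the tools used throughout the rest of the paper, while your argument is shorter and purely linear-algebraic.
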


\begin{proof}
Let $ f $ be a map from $ \mathbb{F}_{p^{s}}^{e} $ to $ \mathbb{F}_{p}^{n}$ defined by
$$
f(\x)=\big( \mathrm{Tr}(\x\cdot \d_1), \mathrm{Tr}(\x\cdot \d_2),\ldots, \mathrm{Tr}(\x\cdot \d_{n})\big).
$$
Obviously, $f$ is an $\F_p$-linear homomorphism and the image $f(\mathbb{F}_{p^{s}}^{e})$ is $C_{\D}$. By hypothesis, we know that the dimension of $ C_{\D} $
is $es$. Hence, $f$ is injective.

For any $r$-dimensional subspace $ U_{r} \in [C_{\D}, r]_{p}$, denote by $H_{r} = f^{-1}(U_{r})$
the inverse image of $U_{r}$. So, $H_{r}$ is also an $r$-dimensional
subspace. Let $ \{ \beta_{1},\beta_{2},\ldots ,\beta_{r}\}$ be an $\mathbb{F}_{p}$-basis of $ H_{r}$. By definition,
$$
d_{r}(C_{\D})= n-\max\{N(U_{r}): U_{r} \in [C_{\D}, r]_{p}\},
$$
where
\begin{align*}
N(U_{r})&=\Big|\{i: 1\leq i \leq n, c_{i}=0\ \ \textrm{for any\ } \c=(c_{1},c_{2}, \ldots, c_{n}) \in U_{r}\}\Big| \\
&=\Big|\{i: 1\leq i \leq n, \Tr(\beta \cdot \d_{i})=0\ \ \textrm{for any\ } \beta \in H_{r}\}\Big|\\
&=\Big|\{i: 1\leq i \leq n, \Tr(\beta_j \cdot \d_{i})=0, j =1,2,\cdots,r\}\Big|.
\end{align*}
Then, by the orthogonal property of additive characters, we have
\begin{align*}
N(U_{r})
&=\frac{1}{p^{r}}\sum_{i=1}^{n}\sum_{x_{1}\in \mathbb{F}_{p}}\zeta_{p}^{\mathrm{Tr}\big(x_{1}(\beta_{1}\cdot \d_{i})\big)}\ldots \sum_{x_{r}\in \mathbb{F}_{p}}\zeta_{p}^{\mathrm{Tr}\big(x_{r}(\beta_{r}\cdot \d_{i})\big)}  \\
&=\frac{1}{p^{r}}\sum_{i=1}^{n}\sum_{x_{1},\ldots,x_{r}\in \mathbb{F}_{p}}\zeta_{p}^{\mathrm{Tr}\big( (\beta_{1}x_{1}+\ldots+\beta_{r}x_{r})\cdot \d_{i}\big)}    \\
&=\frac{1}{p^{r}}\sum_{i=1}^{n}\sum_{\beta\in H_{r}}\zeta_{p}^{\mathrm{Tr}(\beta\cdot \d_{i})}.
\end{align*}

For any $k$-dimensional subspace $H$ of $\mathbb{F}_{p^{s}}^{e}$,
let
$$
H^{\perp}=\{\textrm{v}\in \mathbb{F}_{p^{s}}^{e}: \mathrm{Tr}(\textrm{u}\cdot \textrm{v})=0 \ \textrm{for any}\ \textrm{u} \in H\}.
$$
We call $H^{\perp}$ the dual space of $H$. Taking a $\mathbb{F}_{p}$-basis $\alpha_{1},\alpha_{2},\cdots,\alpha_{es}$ of $\mathbb{F}_{p^{s}}^{e}$ and $\gamma_1,\gamma_2,\cdots,\gamma_k$ of $H$, then the matrix $\big(\mathrm{Tr}(\alpha_{i}\cdot \alpha_{j})\big)$ is an invertible matrix. For any $\textrm{v}=\sum\limits_{i=1}^{es}x_i\alpha_i\in H^\perp$ with $x_i\in\F_p$, we have $\Tr(\gamma_j\cdot\textrm{v})=\sum\limits_{i=1}^{es}x_i\Tr(\gamma_j\cdot\alpha_i)=0, j=1,2,\cdots,k$.
This is a system of linear equations of $x_1,x_2,\cdots,x_{es}$ and the rank of its coefficient matrix $(\Tr(\gamma_j\cdot\alpha_i))$ is equal to $k$.
Thus, $ \dim_{\mathbb{F}_{p}}(H)+\dim_{\mathbb{F}_{p}}(H^{\bot})=es$ and $(H^{\bot})^{\bot}=H$. So,
For $\textrm{y}\in \mathbb{F}_{p^{s}}^{e}$,
$$
\sum_{\beta\in H_{r}}\zeta_{p}^{\mathrm{Tr}(\beta\cdot \textrm{y})}=\left\{\begin{array}{ll}
|H_{r}|, & \textrm{if\ } \ \textrm{y} \in H_{r}^{\bot}, \\
0, & \textrm{otherwise\ }.
\end{array}
\right.
$$
Hence, we have
$$
N(U_{r})
=\frac{1}{p^{r}}\sum_{y\in D \cap H_{r}^{\bot}} |H_{r}|=|D \cap H_{r}^{\bot}|.
$$
So, the desired result follows from the fact that there is a bijection
between $[\mathbb{F}_{p^{s}}^{e},r]_{p}$ and $[\mathbb{F}_{p^{s}}^{e},es-r]_{p}$.
We complete the proof.
\end{proof}

\section{Two classes of binary linear codes}

In this section, we construct two classes of binary linear codes and determine their parameters.
We firstly present a few more auxiliary results which will be needed in proving our main results.

Let $l$ be a prime such that $2$ is a primitive root modulo $l^{m}$.
Here and after, let $p=2, q=2^{\phi(l^{m})},$  where $\phi$ is the Euler phi-function.
Let $\gamma$ be a fixed primitive element of $\mathbb{F}_{q}^*$ and $\chi_{1}$ be
the \textit{canonical additive character} over $\mathbb{F}_{q}$, then for any $x\in\F_q$,
$\chi_{1}(x)=(-1)^{\Tr(x)}$. Set $\alpha=\gamma^{\frac{q-1}{l^{m}}}$.
See \cite{16LN97} for more information about additive characters over finite fields.

\subsection{An exponential sum}\label{subsec:3.1}

For any $ a, b\in \mathbb{F}_{q}$, we define
$$
S(a,b)=\sum_{x\in \mathbb{F}_{q}^{*}}\chi_1\left(ax^{\frac{q-1}{l^{m}}}+bx\right)\ \ \textrm{and} \ \ S(a)=\sum_{i=0}^{l^{m}-1}\chi_{1}(a\alpha^{i}).
$$

Since $2$ is a primitive root modulo $l^{m},$ we have that $\mathbb{F}_{q}=\mathbb{F}_{2}(\alpha).$
Expressing $u\in\mathbb{F}_{q}$  in the basis
$\{\alpha, \alpha^{2},\ldots,\alpha^{\phi(l^{m})}\}$, say
$$u=\sum_{i=1}^{\phi(l^{m})}a_{j}\alpha^{j},$$ where $a_{j}\in\mathbb{F}_{2}.$
For $i=0,1,\ldots,l^{m-1}-1$, we denote by
$\textrm{u}^{(i)}$ the following sub-vector of length $l-1$ of the coordinate vector $\textrm{u}=(a_{1},\ldots,a_{\phi(l^{m})})$ of
$u$:
$$\textrm{u}^{(i)}=(a_{l^{m-1}-i},a_{2l^{m-1}-i},\ldots,a_{(l-1)l^{m-1}-i}).$$

Recall that $\wt(\x)$ is the Hamming weight of binary vector $\x$. For $a\in \mathbb{F}_{q}^{\ast}$, we define two subsets of $\mathbb{F}_{q}^{\ast}$ as follows.
$$
E_{a}=\{u\in \mathbb{F}_{q}^{\ast}: \ 2|\wt\big((a u^{-\frac{q-1}{l^{m}}})^{(0)}\big) \},
$$
$$
O_{a}=\{u\in \mathbb{F}_{q}^{\ast}: \ 2\nmid \wt\big((a u^{-\frac{q-1}{l^{m}}})^{(0)}\big) \}.
$$
In subsection 3.3, we will give the cardinal numbers of $E_{a}$ and $O_{a}$.

Now, we give some results on $S(a,b)$ and $S(a)$ in the following lemmas, which will
play important roles in settling the parameters of our codes.

\begin{lemma}\cite[Lemma 4]{LL18}\label{lem:S(a,b)}
 Let $ a\in\F_q^*, b \in \mathbb{F}_{q}$ and let $ c=ab^{-\frac{q-1}{l^{m}}} $ if $b\neq0.$ Then,

$$
S(a,b)=\left\{\begin{array}{ll}
\frac{q-1}{l^{m}}S(a), & \textrm{if\ } \ b=0, \\
(-1)^{\wt(\c^{(0)})}\sqrt{q}-\frac{\sqrt{q}+1}{l^{m}}S(a), & \textrm{if\ } \ b\neq0.
\end{array}
\right.
$$
\end{lemma}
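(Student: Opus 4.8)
The plan is to dispose of $b=0$ directly and to reduce the case $b\neq 0$ to the cyclotomic Gaussian periods of order $l^{m}$ over $\F_q$, whose two-valued shape is forced by the semiprimitivity of $2$.

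First I would treat $b=0$. The map $x\mapsto x^{\frac{q-1}{l^{m}}}$ is $\frac{q-1}{l^{m}}$-to-one from $\F_q^{*}$ onto the group $\langle\alpha\rangle$ of $l^{m}$-th roots of unity, so each value $\alpha^{i}$ is attained exactly $\frac{q-1}{l^{m}}$ times. Hence
\[
S(a,0)=\frac{q-1}{l^{m}}\sum_{i=0}^{l^{m}-1}\chi_1(a\alpha^{i})=\frac{q-1}{l^{m}}S(a),
\]
which is the first branch.

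For $b\neq 0$ I would substitute $x=b^{-1}y$, giving $S(a,b)=\sum_{y\in\F_q^{*}}\chi_1\big(c\,y^{\frac{q-1}{l^{m}}}+y\big)$ with $c=ab^{-\frac{q-1}{l^{m}}}$; since $b^{-\frac{q-1}{l^{m}}}\in\langle\alpha\rangle$, reindexing shows $S(c)=S(a)$, so the final answer may be phrased through $S(a)$. Grouping $y$ according to the coset $C_i=\{y:y^{\frac{q-1}{l^{m}}}=\alpha^{i}\}$ and writing $\eta_i=\sum_{y\in C_i}\chi_1(y)$ for the associated Gaussian period, one gets $S(a,b)=\sum_{i=0}^{l^{m}-1}\chi_1(c\alpha^{i})\,\eta_i$. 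The core input is the evaluation of the $\eta_i$. Because $2$ is a primitive root modulo $l^{m}$, its order $\phi(l^{m})$ is even and $2^{\phi(l^{m})/2}\equiv -1\pmod{l^{m}}$, i.e. the Frobenius $x\mapsto x^{\sqrt q}$ inverts $\langle\alpha\rangle$; this is exactly the semiprimitive situation, in which the periods take only two values. I would show
\[
\eta_0=\frac{(l^{m}-1)\sqrt q-1}{l^{m}},\qquad \eta_i=\frac{-\sqrt q-1}{l^{m}}\ \ (i\neq 0),
\]
consistent with $\sum_i\eta_i=\sum_{y\in\F_q^{*}}\chi_1(y)=-1$. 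Substituting these and using $\sum_{i\neq0}\chi_1(c\alpha^{i})=S(a)-\chi_1(c)$ collapses the sum to $\chi_1(c)\sqrt q-\frac{\sqrt q+1}{l^{m}}S(a)$.

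It remains to identify the sign $\chi_1(c)=(-1)^{\Tr(c)}$ with $(-1)^{\wt(c^{(0)})}$. Writing $c=\sum_{j=1}^{\phi(l^{m})}a_j\alpha^{j}$, one has $\Tr(c)=\sum_j a_j\Tr(\alpha^{j})$, and I would evaluate each $\Tr(\alpha^{j})$ as $l^{m-m'}$ times the sum of the conjugates of a primitive $l^{m'}$-th root of unity, where $l^{m'}=\ord(\alpha^{j})$; by primitivity of $2$ these conjugates form a single Frobenius orbit summing to $\mu(l^{m'})$, and since $l^{m-m'}$ is odd this gives $\Tr(\alpha^{j})\equiv\mu(l^{m'})\pmod 2$, which is $\equiv 1$ exactly when $m'=1$. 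Thus $\Tr(\alpha^{j})\equiv 1\pmod 2$ precisely when $\ord(\alpha^{j})=l$, i.e. when $j\in\{l^{m-1},2l^{m-1},\dots,(l-1)l^{m-1}\}$; since these exponents index exactly the entries of $c^{(0)}$, we get $\Tr(c)\equiv\wt(c^{(0)})\pmod 2$ and the claimed factor $(-1)^{\wt(c^{(0)})}\sqrt q$. The cyclotomic relation $1+\alpha^{l^{m-1}}+\cdots+\alpha^{(l-1)l^{m-1}}=\Phi_{l^{m}}(\alpha)=0$ underlies this trace count. The main obstacle is the two-valued evaluation of the semiprimitive Gaussian periods $\eta_i$; everything else is reindexing and a short parity computation. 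A cleaner alternative would expand $\eta_i=\frac{1}{l^{m}}\sum_k \overline{\psi_k}(\gamma^{i})\,g(\psi_k)$ and invoke the semiprimitive Gauss-sum theorem $g(\psi_k)=\pm\sqrt q$ for nontrivial characters $\psi_k$, but controlling the resulting signs is precisely the delicate point.
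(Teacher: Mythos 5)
The paper offers no proof of this lemma for you to be compared against: it is imported wholesale from \cite[Lemma 4]{LL18}, so your argument can only be judged on its own merits, and on those merits it is correct. The $b=0$ branch is exactly as you say ($x\mapsto x^{\frac{q-1}{l^{m}}}$ is $\frac{q-1}{l^{m}}$-to-one onto $\langle\alpha\rangle$), and the substitution $x=b^{-1}y$ together with the invariance of $S(\cdot)$ under multiplying the argument by a power of $\alpha$ is fine. The one input you defer --- the two-valued evaluation of the periods $\eta_i$ --- is a standard quotable fact, and the signs you flag as delicate come out exactly as you wrote them: here $q=(2^{j})^{2}$ with $j=\phi(l^{m})/2$ and $2^{j}\equiv-1\pmod{l^{m}}$ (the unique involution of the cyclic group $(\Z/l^{m}\Z)^{*}$, which $2$ generates), so you are in the simplest semiprimitive situation $q=p^{2j}$ with $p=2$, where the semiprimitive Gauss-sum theorem (Baumert--McEliece, Myerson; in coding-theoretic form, the Gaussian-period lemma of Ding and Yang) gives $g(\psi)=+\sqrt{q}$ for every nontrivial character $\psi$ of order dividing $l^{m}$, hence $\eta_0=\frac{(l^{m}-1)\sqrt{q}-1}{l^{m}}$ and $\eta_i=\frac{-\sqrt{q}-1}{l^{m}}$ for $i\neq 0$, precisely your values (sanity check at $l^{m}=3$, $q=4$: $\eta_0=1$, $\eta_1=\eta_2=-1$). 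So cite that theorem rather than re-derive it, and your collapse of $\sum_i\chi_1(c\alpha^{i})\eta_i$ to $\chi_1(c)\sqrt{q}-\frac{\sqrt{q}+1}{l^{m}}S(a)$ finishes the analytic part. Your parity identification $\Tr(c)\equiv\wt(c^{(0)})\pmod 2$ is also sound: since $2$ is a primitive root modulo $l^{m'}$ for every $m'\leq m$, the $\F_2$-conjugates of an element of order $l^{m'}$ are all the primitive $l^{m'}$-th roots of unity, whose sum is $\mu(l^{m'})$ reduced mod $2$ (read off from $\Phi_{l^{m'}}$, which stays irreducible mod $2$), and the odd cofactor $l^{m-m'}$ does not change parity; thus $\Tr(\alpha^{j})=1$ exactly for $j\in\{l^{m-1},2l^{m-1},\ldots,(l-1)l^{m-1}\}$, which are the coordinates collected in $c^{(0)}$. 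In short: no gap of substance, provided the semiprimitive period evaluation is invoked as a known result rather than left as ``to be shown.''
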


\begin{lemma}\cite[Theorem 1]{26LL18}\label{lem:S(a)}
 Let $a\in\mathbb{F}_{q}$. Then
$$ S(a)=\sum_{i=0}^{l^{m-1}-1}(-1)^{\wt(\a^{(i)})}(l-2\wt(\a^{(i)})).$$
\end{lemma}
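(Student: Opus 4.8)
The plan is to express the sum $S(a)$ in terms of the canonical additive character evaluated on the powers $a\alpha^i$, and then regroup the $l^m$ terms according to how $\alpha^i$ interacts with the coordinate decomposition defined earlier. Recall that $S(a)=\sum_{i=0}^{l^m-1}\chi_1(a\alpha^i)$ where $\alpha=\gamma^{(q-1)/l^m}$ is an element of multiplicative order $l^m$, so the $\alpha^i$ range over the unique cyclic subgroup of $\mathbb{F}_q^*$ of order $l^m$. Since $\chi_1(x)=(-1)^{\Tr(x)}$, each summand is a sign, and the task is to evaluate the total signed count. The key structural input is the coordinate map $u\mapsto(a_1,\ldots,a_{\phi(l^m)})$ relative to the basis $\{\alpha,\alpha^2,\ldots,\alpha^{\phi(l^m)}\}$, together with the sub-vectors $\textrm{u}^{(i)}$ of length $l-1$ for $i=0,1,\ldots,l^{m-1}-1$.

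\textbf{First I would} partition the index set $\{0,1,\ldots,l^m-1\}$ into $l^{m-1}$ blocks of size $l$, indexed by $i=0,1,\ldots,l^{m-1}-1$, where the $i$-th block collects those exponents congruent to a fixed residue determined by $i$ modulo the relevant stride. The point is that each sub-vector $\textrm{u}^{(i)}=(a_{l^{m-1}-i},a_{2l^{m-1}-i},\ldots,a_{(l-1)l^{m-1}-i})$ isolates exactly the $l-1$ coordinates associated with a single block, so that the contribution of the $i$-th block to $S(a)$ depends only on the $l-1$ bits in $\textrm{u}^{(i)}$, where $u=a$ here (more precisely, on $\Tr$ applied to the corresponding powers). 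I would then compute the block sum explicitly: within a block of $l$ consecutive relevant powers, the trace values are governed by a vector of $l-1$ bits (since the full set of $l$-th power relations, coming from $\alpha$ having order divisible by $l$, imposes one linear dependence), and summing $(-1)^{\Tr(\cdot)}$ over the block yields a term of the form $\pm(l-2\wt(\textrm{a}^{(i)}))$.

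\textbf{The main obstacle} will be establishing the precise combinatorial identity for a single block, namely that $\sum_{\text{block }i}(-1)^{\Tr(a\alpha^j)}=(-1)^{\wt(\textrm{a}^{(i)})}(l-2\wt(\textrm{a}^{(i)}))$. This requires understanding how the trace interacts with the minimal polynomial structure of $\alpha$: because $2$ is a primitive root modulo $l^m$, the element $\alpha$ of order $l^m$ generates $\mathbb{F}_q$ over $\mathbb{F}_2$, and its conjugates under the Frobenius are precisely the primitive $l^m$-th roots of unity, so the trace of $a\alpha^j$ over a block reduces to counting weights against a carefully chosen basis. I expect the cleanest route is to invoke the relation $\sum_{t=0}^{l-1}(\text{$l$-th roots of unity}) = 0$ to pin down the single linear dependence among the $l$ powers in each block, convert the $l$ sign values into expressions in the $l-1$ free bits, and then observe that among $l$ terms whose signs are determined by $\wt(\textrm{a}^{(i)})$ bits, exactly $\wt(\textrm{a}^{(i)})$ of them carry the opposite sign, giving the net count $l-2\wt(\textrm{a}^{(i)})$ up to the overall sign $(-1)^{\wt(\textrm{a}^{(i)})}$.

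\textbf{To finish,} I would sum the $l^{m-1}$ block contributions to obtain $S(a)=\sum_{i=0}^{l^{m-1}-1}(-1)^{\wt(\textrm{a}^{(i)})}(l-2\wt(\textrm{a}^{(i)}))$, which is exactly the claimed formula. Since this is cited as \cite[Theorem 1]{26LL18}, I anticipate the proof in the source handles the delicate basis-and-trace bookkeeping in full; my sketch above identifies where that bookkeeping concentrates, namely the per-block sign computation, which is the one genuinely nontrivial step.
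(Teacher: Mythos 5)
First, a caveat: the paper itself contains no proof of this lemma --- it is imported verbatim from Moisio \cite[Theorem 1]{26LL18} --- so your sketch can only be judged as a free-standing reconstruction, not against an internal argument. Its skeleton is the right one: partition the $l^{m}$ exponents into the $l^{m-1}$ residue classes modulo $l^{m-1}$, each of size $l$, prove the per-block identity $\sum_{d=0}^{l-1}(-1)^{\Tr(a\alpha^{i+dl^{m-1}})}=(-1)^{\wt(\a^{(i)})}(l-2\wt(\a^{(i)}))$, and sum over blocks; your closing count (exactly $\wt(\a^{(i)})$ of the $l$ terms in a block carry the opposite sign) is also exactly what happens.

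Nevertheless there is a genuine gap at the step you yourself flag as the main obstacle, and the tool you propose cannot close it. The single linear dependence $\sum_{d=0}^{l-1}\alpha^{i+dl^{m-1}}=\alpha^{i}\sum_{d=0}^{l-1}\big(\alpha^{l^{m-1}}\big)^{d}=0$ gives, after applying the $\F_{2}$-linear trace, only $\sum_{d=0}^{l-1}\Tr\big(a\alpha^{i+dl^{m-1}}\big)=0$, i.e.\ that each block contains an \emph{even number} of minus signs. That parity statement does not determine the block sum (for $l=5$ it still allows the values $5$, $1$, $-3$), and it cannot be bootstrapped into your ``convert the $l$ sign values into the $l-1$ free bits'' step: the numbers $\Tr(a\alpha^{j})$ are not coordinates of $a$, and relating them to the bits $a_{k}$ requires knowing the Gram matrix $\big(\Tr(\alpha^{k+j})\big)_{k,j}$ of the trace form in the basis $\{\alpha,\ldots,\alpha^{\phi(l^{m})}\}$ --- which nothing in your sketch computes.

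The missing idea is the explicit evaluation of the trace on the subgroup $\langle\alpha\rangle$:
$$
\Tr(\alpha^{t})=\left\{\begin{array}{ll}
1, & \textrm{if } \alpha^{t} \textrm{ has order exactly } l,\ \textrm{i.e. } t\equiv cl^{m-1}\hskip-6pt\pmod{l^{m}},\ c\not\equiv 0\hskip-6pt\pmod{l},\\
0, & \textrm{otherwise.}
\end{array}\right.
$$
Proving this needs strictly more than your observation that the conjugates of $\alpha$ are the primitive $l^{m}$-th roots of unity: one uses that $2$ is a primitive root modulo $l^{s}$ for every $s\leq m$, that in characteristic $2$ the sum of the primitive $l^{s}$-th roots of unity equals $1$ for $s=1$ and $0$ for $s\geq 2$, that $[\F_{q}:\F_{2}(\alpha^{t})]=l^{m-s}$ is odd, and that $\Tr(1)=\phi(l^{m})\equiv 0\pmod{2}$. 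With this in hand the block computation is immediate: writing $j=i+dl^{m-1}$ and expanding $\Tr(a\alpha^{j})=\sum_{k}a_{k}\Tr(\alpha^{k+j})$, the surviving indices are $k\equiv el^{m-1}-i\pmod{l^{m}}$ with $e\not\equiv-d\pmod{l}$; the class $e\equiv 0$ never meets $\{1,\ldots,\phi(l^{m})\}$, so $\Tr(a\alpha^{i})\equiv\wt(\a^{(i)})$, while for $d\neq 0$ exactly one further coordinate is removed, giving $\Tr(a\alpha^{i+dl^{m-1}})\equiv\wt(\a^{(i)})-a_{(l-d)l^{m-1}-i}$. Hence the block sum is $(-1)^{\wt(\a^{(i)})}\big(1+\sum_{c=1}^{l-1}(-1)^{a_{cl^{m-1}-i}}\big)=(-1)^{\wt(\a^{(i)})}(l-2\wt(\a^{(i)}))$, which is your identity. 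Without this trace evaluation, the per-block step --- the entire content of the lemma --- remains unproved in your proposal.
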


\begin{lemma}\cite[Theorem 3]{26LL18}\label{lem:3}
 The value set of $S(a)$, as $a$ runs over $\mathbb{F}_{q}^{*}$, is
$$ \{l^{m}-4j|j=1,\ldots,\frac{l^{m-1}(l-1)}{2}\}.$$
\end{lemma}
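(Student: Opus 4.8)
The plan is to establish that the value set of $S(a)$ equals $\{l^m - 4j : j = 1, \ldots, \frac{l^{m-1}(l-1)}{2}\}$ by working directly from the formula provided in Lemma~\ref{lem:S(a)}, namely
\[
S(a) = \sum_{i=0}^{l^{m-1}-1} (-1)^{\wt(\a^{(i)})} \big(l - 2\wt(\a^{(i)})\big).
\]
First I would analyze the contribution of each summand in isolation. For a single index $i$, the vector $\a^{(i)}$ has length $l-1$, so $\wt(\a^{(i)})$ ranges over $\{0, 1, \ldots, l-1\}$. Setting $w = \wt(\a^{(i)})$, the summand $(-1)^w (l - 2w)$ takes the value $l - 2w$ when $w$ is even and $-(l-2w) = 2w - l$ when $w$ is odd. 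Tabulating these as $w$ runs from $0$ to $l-1$, I would show that each individual summand lies in the set $\{-(l-2), -(l-4), \ldots, l-2, l\}$, i.e.\ the odd integers in $[-(l-2), l]$ if $l$ is odd; the key observation is that each summand is congruent to $l \pmod 4$ after a parity check, so that $S(a) \equiv l^{m-1} \cdot l \equiv l^m \pmod 4$, which explains why $S(a)$ has the form $l^m - 4j$.

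Next I would reduce the problem to a counting/attainability question. The congruence argument shows that every value of $S(a)$ is of the form $l^m - 4j$ for some integer $j$; I then need to pin down the exact range of $j$, namely $1 \le j \le \frac{l^{m-1}(l-1)}{2}$. The extreme values correspond to choosing all $\a^{(i)}$ to make each summand as large or as small as possible. The maximum of a single summand is $l$ (attained at $w=0$), giving a total of $l^m$; however $j=0$ is excluded, so I must verify that $S(a) = l^m$ cannot occur for $a \in \mathbb{F}_q^*$ (it would require every $\a^{(i)}$ to be the zero vector, forcing $a = 0$). The minimum summand is $-(l-2)$ (attained at $w = 1$, giving $2\cdot 1 - l = 2-l$), so the smallest possible total is $-l^{m-1}(l-2)$, which rearranges to $l^m - 4 \cdot \frac{l^{m-1}(l-1)}{2}$, matching the upper endpoint $j = \frac{l^{m-1}(l-1)}{2}$.

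The heart of the argument, and the step I expect to be the main obstacle, is proving \emph{surjectivity}: that as $a$ ranges over $\mathbb{F}_q^*$, the tuple $(\wt(\a^{(0)}), \ldots, \wt(\a^{(l^{m-1}-1)}))$ of block weights can be made to realize every achievable value of the sum, so that $j$ sweeps the entire interval $[1, \frac{l^{m-1}(l-1)}{2}]$ without gaps. Here I would exploit the structure relating $a$ to its coordinate vector $\textrm{u} = (a_1, \ldots, a_{\phi(l^m)})$ in the basis $\{\alpha, \ldots, \alpha^{\phi(l^m)}\}$: the blocks $\a^{(i)}$ partition the coordinates of $a$ into $l^{m-1}$ disjoint sub-vectors of length $l-1$, and since $a$ determines these coordinates freely across $\mathbb{F}_2^{\phi(l^m)} \setminus \{0\}$, the block weights $(w_0, \ldots, w_{l^{m-1}-1})$ can be prescribed independently, each ranging over $\{0, \ldots, l-1\}$, subject only to not all being zero. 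I would then check that as we independently increment each block summand by the gap $4$ between consecutive achievable single-block contributions, the overall sum $S(a)$ moves through a full arithmetic progression with common difference $4$; a clean way to see the absence of gaps is to note that moving one block's weight appropriately changes $S(a)$ by exactly $4$, and to argue inductively that starting from the maximal nonzero configuration one can decrease $S(a)$ in steps of $4$ down to the minimum. Finally I would invoke Lemma~\ref{lem:3}'s own hypotheses (that $2$ is a primitive root modulo $l^m$) only insofar as it guarantees $\mathbb{F}_q = \mathbb{F}_2(\alpha)$ and the stated basis structure, which is what licenses the free choice of coordinate vectors.
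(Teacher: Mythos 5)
Your proposal is correct, but you should know that the paper itself contains no proof of this statement: Lemma~\ref{lem:3} is quoted verbatim from Moisio \cite[Theorem 3]{26LL18}, exactly as Lemma~\ref{lem:S(a)} is quoted from the same source. What you have written is therefore not an alternative to the paper's argument but a self-contained derivation of the cited result from the cited formula, which is a genuinely different (and more informative) route than the paper's bare citation. Your derivation is sound: the blocks $\a^{(0)},\ldots,\a^{(l^{m-1}-1)}$ really do partition the $\phi(l^m)=l^{m-1}(l-1)$ coordinates of $a$ (block $i$ consists precisely of the coordinates whose index is $\equiv -i \pmod{l^{m-1}}$), so, since $\{\alpha,\alpha^2,\ldots,\alpha^{\phi(l^m)}\}$ is a basis, the tuple of block weights can be prescribed arbitrarily in $\{0,\ldots,l-1\}^{l^{m-1}}$, subject only to not all blocks vanishing when $a\neq 0$; and your identification of the extremes and the exclusion of $j=0$ are both right. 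Two points deserve tightening when you write this up. First, the value set of a single summand $(-1)^w(l-2w)$, $0\le w\le l-1$, is exactly the arithmetic progression $\{\,l-4j : 0\le j\le \tfrac{l-1}{2}\,\}$, with $l-4j$ attained at the even weight $w=2j$; this is sharper than your containment in ``the odd integers of $[-(l-2),l]$'' and it is what makes the no-gaps step immediate: any total $J$ with $1\le J\le \tfrac{l^{m-1}(l-1)}{2}$ can be written as $J=\sum_i j_i$ with each $0\le j_i\le\tfrac{l-1}{2}$ and some $j_i\neq 0$, and then realized block by block, which replaces your somewhat vague induction ``in steps of $4$.'' Second, your congruence $S(a)\equiv l^m \pmod 4$ is correct (for odd $w$ the summand is $2w-l\equiv 2-l\equiv l\pmod 4$ precisely because $l$ is odd), but once the exact single-summand value set is in hand it is no longer needed as a separate step. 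With these adjustments your sketch is a complete proof of the lemma modulo Lemma~\ref{lem:S(a)}.
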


\subsection{Our defining sets and some auxiliary results}\label{subsec:3.2}

In this paper, we choose a defining set contained in $\mathbb{F}_{q}^{2} $ as follows.
For $a\in\mathbb{F}_{q}^{\ast}, b\in\mathbb{F}_{q}$, set
\begin{align*}
\D_{(a,b)}&=\{(x,y)\in \mathbb{F}_{q}^{2}\setminus\{(0,0)\}: \mathrm{Tr}(ax^{\frac{q-1}{l^{m}}}+by)=0\}
=\{\d_1,\d_2,\ldots,\d_n\}
\end{align*}
and the corresponding binary linear code $C_{\D_{(a,b)}}$ is defined by
\begin{equation}\label{defcode}
C_{\D_{(a,b)}}=\Big\{\left( \Tr(\x\cdot \d_1), \Tr(\x\cdot \d_2),\ldots, \Tr(\x\cdot \d_{n})\right):\x\in \mathbb{F}_{q}^{2}\Big\}.
\end{equation}

Now, we first calculate the length of the linear codes $C_{\D_{(a,b)}}$ and the Hamming weight of non-zero codewords in $C_{\D_{(a,b)}}$.

\begin{lemma}\label{lem:n}
For $a\in\mathbb{F}_{q}^{\ast}, b\in\mathbb{F}_{q}$. Then,
$$
n=|\D_{(a,b)}|=\left\{\begin{array}{ll}
\frac{1}{2}q\big(q+1+S(a,0)\big)-1, & \textrm{if\ } \ b=0, \\
\frac{1}{2}q^{2}-1, & \textrm{if\ } \ b\neq0.
\end{array}
\right.
$$
\end{lemma}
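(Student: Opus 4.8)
The plan is to count $N$, the number of pairs $(x,y)\in\F_q^2$ satisfying $\Tr(ax^{\frac{q-1}{l^m}}+by)=0$, and then subtract $1$ to exclude the origin $(0,0)$, which always belongs to this set since $\Tr(0)=0$. The natural tool is the orthogonality of the canonical additive character: because $p=2$, the indicator of the condition $\Tr(w)=0$ equals $\frac12\big(1+\chi_1(w)\big)$. Applying this with $w=ax^{\frac{q-1}{l^m}}+by$ gives
$$N=\frac12\sum_{(x,y)\in\F_q^2}\big(1+\chi_1(ax^{\frac{q-1}{l^m}}+by)\big)=\frac12\Big(q^2+T\Big),$$
where $T=\sum_{(x,y)\in\F_q^2}\chi_1(ax^{\frac{q-1}{l^m}}+by)$.

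Next I would factor $T$ using the additivity of $\chi_1$, writing $\chi_1(ax^{\frac{q-1}{l^m}}+by)=\chi_1(ax^{\frac{q-1}{l^m}})\chi_1(by)$, so that $T$ separates into a product of a sum over $x$ and a sum over $y$:
$$T=\Big(\sum_{x\in\F_q}\chi_1(ax^{\frac{q-1}{l^m}})\Big)\Big(\sum_{y\in\F_q}\chi_1(by)\Big).$$
The two cases then emerge from the behaviour of the $y$-sum. If $b\neq0$, then as $y$ ranges over $\F_q$ so does $by$, and orthogonality forces $\sum_{y\in\F_q}\chi_1(by)=0$; hence $T=0$, $N=\frac12 q^2$, and $n=\frac12 q^2-1$. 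If $b=0$, the $y$-sum equals $q$, and I would evaluate the $x$-sum by isolating the $x=0$ term (which contributes $\chi_1(0)=1$) from the rest, recognising $\sum_{x\in\F_q^*}\chi_1(ax^{\frac{q-1}{l^m}})$ as precisely $S(a,0)$ from the definition of $S(a,b)$. Thus $\sum_{x\in\F_q}\chi_1(ax^{\frac{q-1}{l^m}})=1+S(a,0)$, so $T=q\big(1+S(a,0)\big)$ and $N=\frac12 q\big(q+1+S(a,0)\big)$, which gives the claimed value of $n$ after subtracting $1$.

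This is essentially a routine character-sum computation, so I do not anticipate a genuine obstacle. The only point requiring minor care is the bookkeeping: correctly matching the definition of $S(a,0)$, which is a sum over $\F_q^*$, against the full sum over $\F_q$ that appears here, and remembering to remove the origin at the very end. Everything else is forced by orthogonality of $\chi_1$.
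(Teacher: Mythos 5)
Your proposal is correct and follows essentially the same route as the paper: the indicator $\frac12\big(1+\chi_1(w)\big)$ is exactly the paper's averaging over $z\in\F_2$, followed by the same factorization of the double character sum and the same case analysis on $b$, including the careful matching of $S(a,0)$ (a sum over $\F_q^*$) with the full sum over $\F_q$ via the $x=0$ term. No gaps; this matches the paper's proof.
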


\begin{proof}
By the orthogonal property of additive characters, we have
\begin{align*}
|\D_{(a,b)}|
&=\frac{1}{2}\sum_{x,y\in \mathbb{F}_{q}}\sum_{z\in \mathbb{F}_{2}}(-1)^{\mathrm{Tr}\big(z(ax^{\frac{q-1}{l^{m}}}+by)\big)}-1  \\
&=\frac{1}{2}\sum_{x,y\in \mathbb{F}_{q}}\big(1+(-1)^{\mathrm{Tr}(ax^{\frac{q-1}{l^{m}}}+by)}\big)-1  \\
&=\frac{1}{2}q^{2}+\frac{1}{2}\sum_{x,y\in \mathbb{F}_{q}}(-1)^{\mathrm{Tr}(ax^{\frac{q-1}{l^{m}}}+by)}-1  \\
&=\frac{1}{2}q^{2}+\frac{1}{2}\sum_{y\in \mathbb{F}_{q}}(-1)^{\mathrm{Tr}(by)}\sum_{x\in \mathbb{F}_{q}}(-1)^{\mathrm{Tr}(ax^{\frac{q-1}{l^{m}}})}-1.
\end{align*}

When $b=0$, we have
\begin{align*}
|\D_{(a,0)}|
&=\frac{1}{2}q^{2}+\frac{1}{2}q\sum_{x\in \mathbb{F}_{q}}(-1)^{\mathrm{Tr}(ax^{\frac{q-1}{l^{m}}})}-1  \\
&=\frac{1}{2}q^{2}+\frac{1}{2}q\big(1+S(a,0)\big)-1  \\
&=\frac{1}{2}q\big(q+1+S(a,0)\big)-1.
\end{align*}

When $b\neq0$, we have $\sum_{y\in \mathbb{F}_{q}}(-1)^{\mathrm{Tr}(by)}=0$, which follows that
$|\D_{(a,b)}|=\frac{1}{2}q^{2}-1$. We complete the proof.
\end{proof}

Let $\c_{(u,v)}$ be the corresponding codeword in $C_{\D_{(a,b)}}$ in \eqref{defcode} with $(u,v)\in \F_q^2$, that is,
$$
\c_{(u,v)}=\Big(\Tr(ux+vy)\Big)_{(x,y)\in D_{(a,b)}}.
$$
Obviously, $\c_{(0,0)} = 0$ and $\wt(\c_{(0,0)}) = 0$.
Now, we determine the Hamming weight $\wt(\c_{(u,v)})$ with $(u,v)\neq (0,0)$ in the following lemma.

\begin{lemma} \label{lem:wt}
Let $(u,v)(\neq (0,0))\in\F_{q}^{2}$. We have
\begin{enumerate}
\item[(1)] If $ b = 0 $, then
$$
\wt(\c_{(u,v)})=\left\{\begin{array}{ll}
\frac{1}{4}q\Big(q+1+S(a,0)\Big), \  & \textrm{if\ } \ v\neq0, \\
\frac{1}{4}q\Big(q+S(a,0)-S(a,u)\Big), \  & \textrm{if\ } \ \ v=0.
\end{array}
\right.
$$
\item[(2)] If $ b\neq 0$, then
$$
\wt(\c_{(u,v)})=\left\{\begin{array}{ll}
\frac{1}{4}q^{2}, \  & \textrm{if\ } \  v=0 \ \textrm{or\ } \ v\neq b,\ v\neq 0,\\
\frac{1}{4}q\Big(q-1-S(a,u)\Big), \  & \textrm{if\ } \ v=b.
\end{array}
\right.
$$
\end{enumerate}
\end{lemma}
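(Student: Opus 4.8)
The plan is to compute the weight by a complementary count. A position of $\c_{(u,v)}$ indexed by $(x,y)\in\D_{(a,b)}$ is zero exactly when $\Tr(ux+vy)=0$, so I would write $\wt(\c_{(u,v)})=n-N_{0}$, where $N_{0}$ is the number of pairs $(x,y)\in\F_{q}^{2}\setminus\{(0,0)\}$ satisfying both $\Tr(ax^{\frac{q-1}{l^{m}}}+by)=0$ and $\Tr(ux+vy)=0$. Since $(0,0)$ trivially satisfies both trace conditions, passing to the full count $N$ over all of $\F_{q}^{2}$ gives $N_{0}=N-1$, hence $\wt(\c_{(u,v)})=n-N+1$. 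This reduces everything to evaluating $N$.

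First I would apply the orthogonality relation $\sum_{z\in\F_{2}}(-1)^{z\,\Tr(w)}=2$ when $\Tr(w)=0$ and $0$ otherwise to each of the two constraints, obtaining
$$
N=\frac{1}{4}\sum_{x,y\in\F_{q}}\Big(1+(-1)^{\Tr(ax^{\frac{q-1}{l^{m}}}+by)}\Big)\Big(1+(-1)^{\Tr(ux+vy)}\Big).
$$
Expanding the product gives four sums. The constant term contributes $q^{2}$. The pure $(u,v)$-term factors as $\big(\sum_{x}(-1)^{\Tr(ux)}\big)\big(\sum_{y}(-1)^{\Tr(vy)}\big)$ and vanishes because $(u,v)\neq(0,0)$. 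The two remaining terms are where the exponential sum $S$ enters.

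Next I would evaluate these by factoring the double sum over $x$ and $y$ and using $\sum_{y}(-1)^{\Tr(cy)}=q$ when $c=0$ and $0$ otherwise. The $(a,b)$-term equals $q\big(1+S(a,0)\big)$ when $b=0$ and vanishes when $b\neq0$. The mixed term carries the factor $\sum_{y}(-1)^{\Tr((b+v)y)}$, which is nonzero precisely when $v=b$ (recall we work in characteristic $2$, so $b+v=0$ is equivalent to $v=b$); in that case the surviving factor is $\sum_{x}(-1)^{\Tr(ax^{\frac{q-1}{l^{m}}}+ux)}=1+S(a,u)$. The one genuinely substantive observation is to recognize each surviving character sum over $\F_{q}$ as $1+S(a,0)$ or $1+S(a,u)$ after isolating the $x=0$ contribution, matching the definition of $S(a,b)$.

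Finally I would run the case analysis dictated by whether $b=0$ or $b\neq0$, and within each by the position of $v$ relative to $b$ (equivalently relative to $0$ when $b=0$). Substituting the value of $n$ from Lemma \ref{lem:n} into $\wt(\c_{(u,v)})=n-N+1$ then collapses each case to the claimed closed form. The calculation is largely routine; the only delicate points are the bookkeeping of the $(0,0)$ correction and the characteristic-$2$ equivalence $b+v=0\iff v=b$, which together determine exactly which of the four terms survive in each case.
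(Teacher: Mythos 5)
Your proposal is correct and follows essentially the same route as the paper: both compute $\wt(\c_{(u,v)})=n-|N(u,v)|+1$ by expanding the double character-sum product, killing the pure $(u,v)$-term via $(u,v)\neq(0,0)$, and reading off the surviving sums as $1+S(a,0)$ and $1+S(a,u)$ before the case analysis on $b$ and $v$. Your only (cosmetic) deviation is treating the mixed term uniformly through the condition $b+v=0$ rather than splitting on $b=0$ versus $b\neq0$ first, which is a slightly tidier bookkeeping of the same argument.
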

\begin{proof}
Put $N(u,v)=\{(x,y)\in \mathbb{F}_{q}^{2}: \mathrm{Tr}(ax^{\frac{q-1}{l^{m}}}+by)=0, \mathrm{Tr}(ux+vy)=0\}$, then
\begin{align*}
|N(u,v)|
&=\frac{1}{4}\sum_{x,y\in \mathbb{F}_{q}}\Big(\sum_{z_{1}\in \mathbb{F}_{2}}(-1)^{\mathrm{Tr}(z_{1}(ax^{\frac{q-1}{l^{m}}}+by))}\sum_{z_{2}\in \mathbb{F}_{2}}(-1)^{\mathrm{Tr}(z_{2}(ux+vy))}\Big)  \\
&=\frac{1}{4}\sum_{x,y\in \mathbb{F}_{q}}\Big(\big(1+(-1)^{\mathrm{Tr}(ax^{\frac{q-1}{l^{m}}}+by)}\big)\big(1+(-1)^{\mathrm{Tr}(ux+vy)}\big)\Big)  \\
&=\frac{1}{4}q^{2}+\frac{1}{4}\sum_{x,y\in \mathbb{F}_{q}}(-1)^{\mathrm{Tr}(ax^{\frac{q-1}{l^{m}}}+by)}+\frac{1}{4}\sum_{x,y\in \mathbb{F}_{q}}(-1)^{\mathrm{Tr}(ax^{\frac{q-1}{l^{m}}}+ux+by+vy)},
\end{align*}
where we use the fact that
$$\sum_{x,y\in \F_{q}}(-1)^{\Tr(ux+vy)} =\sum_{x\in \F_{q}}(-1)^{\Tr(ux)}\sum_{y\in \F_{q}}(-1)^{\Tr(vy)} = 0.$$

Now we discuss case by case on the term of $b=0$ or $b\neq0$.

(1) If $b=0$, then
\begin{align*}
|N(u,v)|
&=\frac{1}{4}\Big(q^{2}+\sum_{x,y\in \F_{q}}(-1)^{\Tr(ax^{\frac{q-1}{l^{m}}})}+\sum_{x,y\in \F_{q}}(-1)^{\Tr(ax^{\frac{q-1}{l^{m}}}+ux+vy)}\Big)  \\
&=\frac{1}{4}\Big(q^{2}+q\sum_{x\in \F_{q}}(-1)^{\Tr(ax^{\frac{q-1}{l^{m}}})}+\sum_{y\in \F_{q}}(-1)^{\Tr(vy)}\sum_{x\in \F_{q}}(-1)^{\Tr(ax^{\frac{q-1}{l^{m}}}+ux)}\Big)  \\
&=\frac{1}{4}\Big(q^{2}+q(1+S(a,0))+(1+S(a,u))\sum_{y\in \F_{q}}(-1)^{\Tr(vy)}\Big).
\end{align*}
So, we have
$$
|N(u,v)|=\left\{\begin{array}{ll}
\frac{1}{4}q\Big(q+1+S(a,0)\Big), \  & \textrm{if\ } \ v\neq0, \\
\frac{1}{4}q\Big(q+2+S(a,0)+S(a,u)\Big), \  & \textrm{if\ } \  v=0.
\end{array}
\right.
$$

Noting that $\wt(\c_{(u,v)})=n-|N(u,v)|+1$.
By Lemma \ref{lem:n}, we have
$$
\wt(\c_{(u,v)})=\left\{\begin{array}{ll}
\frac{1}{4}q\Big(q+1+S(a,0)\Big), \  & \textrm{if\ } \ v\neq0, \\
\frac{1}{4}q\Big(q+S(a,0)-S(a,u)\Big), \  & \textrm{if\ } \ \ v=0.
\end{array}
\right.
$$

(2) If $b\neq0$, then $$\sum_{x,y\in \F_{q}}(-1)^{\Tr(ax^{\frac{q-1}{l^{m}}}+by)} =\sum_{x\in \F_{q}}(-1)^{\Tr(ax^{\frac{q-1}{l^{m}}})}\sum_{y\in \F_{q}}(-1)^{\Tr(by)} = 0 ,$$
which follows that
\begin{align*}
|N(u,v)|
&=\frac{1}{4}q^{2}+\frac{1}{4}\sum_{x,y\in \F_{q}}(-1)^{\Tr(ax^{\frac{q-1}{l^{m}}}+by+ux+vy)}  \\
&=\frac{1}{4}q^{2}+\frac{1}{4}\sum_{y\in \F_{q}}(-1)^{\Tr((b+v)y)}\sum_{x\in \F_{q}}(-1)^{\Tr(ax^{\frac{q-1}{l^{m}}}+ux)}  \\
&=\frac{1}{4}q^{2}+\frac{1}{4}(1+S(a,u))\sum_{y\in \F_{q}}(-1)^{\Tr((b+v)y)}.
\end{align*}

So, we have
$$
|N(u,v)|=\left\{\begin{array}{ll}
\frac{1}{4}q^{2}, \  & \textrm{if\ }  v=0, \\
\frac{1}{4}q\Big(q+1+S(a,u)\Big), \  & \textrm{if\ } \ v=b, \\
\frac{1}{4}q^{2}, \  & \textrm{if\ } \ v\neq b,\ v\neq0.
\end{array}
\right.
$$
By Lemma \ref{lem:n} again, we have
$$
\wt(\c_{(u,v)})=\left\{\begin{array}{ll}
\frac{1}{4}q^{2}, \  & \textrm{if\ } \ v=0, \\
\frac{1}{4}q\Big(q-1-S(a,u)\Big), \  & \textrm{if\ } \ v=b, \\
\frac{1}{4}q^{2}, \  & \textrm{if\ } \ v\neq b,\ v\neq0.
\end{array}
\right.
$$
The proof is finished.
\end{proof}

\begin{remark}\label{rem:dim}
By Lemma \ref{lem:wt}, we know that, for $(u,v)(\neq (0,0))\in\F_{q}^{2}$, we have $\wt(\c_{(u,v)})>0$. So, the map: $\F_q^2\rightarrow C_{\D_{(a,b)}}$ defined by $(u,v)\mapsto \c_{(u,v)} $
is an isomorphism as linear spaces over $\F_2$. Hence, the dimension of the codes $C_{\D_{(a,b)}}$ in \eqref{defcode} is equal to $2\phi(l^m)$.
\end{remark}

\begin{lemma}\label{lem:dis}
Let $C_{\D_{(a,b)}}$ be defined in \eqref{defcode}. Then, the minimal distance of the dual code $C_{{\D_{(a,b)}}}^{\perp}$ is at least $2$.
\end{lemma}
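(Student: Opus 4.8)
The plan is to rule out weight-one codewords in the dual $C_{\D_{(a,b)}}^{\perp}$, using the standard equivalence that, for a binary linear code, the dual minimum distance is at least $2$ exactly when no coordinate position is identically zero on the whole code (equivalently, the generator matrix has no zero column). A weight-one dual codeword is necessarily a standard basis vector $e_i$, and $e_i\in C_{\D_{(a,b)}}^{\perp}$ holds precisely when the $i$-th entry of every codeword $\c_{(u,v)}$ vanishes.

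First I would record that, writing $\d_i=(x_i,y_i)\in\F_q^{2}$, the $i$-th coordinate of $\c_{(u,v)}$ equals $\Tr(ux_i+vy_i)$, i.e. it is the value at $(u,v)$ of the $\F_2$-linear functional $(u,v)\mapsto\Tr(ux_i+vy_i)$ on $\F_q^{2}$; this coordinate is identically zero if and only if that functional is identically zero. Then I would argue that the functional is nonzero for every $i$: if $\Tr(ux_i+vy_i)=0$ for all $(u,v)$, setting $v=0$ and invoking the non-degeneracy of the trace form $(u,x)\mapsto\Tr(ux)$ on $\F_q$ forces $x_i=0$, and setting $u=0$ forces $y_i=0$, so $\d_i=(0,0)$. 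This contradicts $\d_i\in\D_{(a,b)}\subseteq\F_q^{2}\setminus\{(0,0)\}$. Consequently no coordinate of $C_{\D_{(a,b)}}$ is identically zero, so $C_{\D_{(a,b)}}^{\perp}$ has no codeword of weight one and hence $d(C_{\D_{(a,b)}}^{\perp})\geq 2$.

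This argument is uniform in $a$ and $b$, so no case distinction between $b=0$ and $b\neq0$ is required; the only inputs are that every defining-set element is nonzero and that the trace bilinear form on $\F_q$ is non-degenerate. There is essentially no obstacle beyond correctly setting up this non-degeneracy step together with the equivalence between dual distance at least $2$ and the absence of an identically-zero coordinate. In particular, because the constraint $\d_i\neq(0,0)$ is built directly into the definition of $\D_{(a,b)}$, the conclusion follows immediately and none of the character-sum machinery (Lemmas \ref{lem:S(a,b)}--\ref{lem:3}) used in the preceding weight computations is needed here.
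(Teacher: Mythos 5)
Your proof is correct and takes essentially the same approach as the paper's: both reduce the claim to showing no coordinate position is identically zero on $C_{\D_{(a,b)}}$, then use the non-degeneracy of the trace form together with $\d_i\neq(0,0)$ to derive a contradiction. The only difference is that you spell out the non-degeneracy step explicitly (setting $v=0$ to force $x_i=0$, then $u=0$ to force $y_i=0$), which the paper compresses into ``by the properties of the trace function.''
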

\begin{proof}
 We prove it by contradiction. If not, then there exists a coordinate $i$ such that the $i$-th entry of all of the codewords of $C_{\D_{(a,b)}}$ is $0$,
that is, $\Tr(\x\cdot \d_{i})=0$ for all $\x\in \F_{q}^{2}$, where $\d_{i}\in \D_{(a,b)}$.
Thus, by the properties of the trace function, we have $\d_{i}=0$. It contradicts with $\d_i \neq 0$.
\end{proof}

\subsection{Weight distribution of $C_{\D_{(a,b)}}$ in \eqref{defcode}}

In this sequel, we determine the weight distributions of linear codes $C_{\D_{(a,b)}}$ in \eqref{defcode}.

\begin{theorem}\label{thm:1}
Let $a\in\F_q^*$. The code $C_{\D_{(a,0)}}$ in \eqref{defcode} is an $[n,2\phi(l^{m}),d]$ binary linear code
with the weight distribution in Table 1, where $d=\frac{1}{4}q(q-\sqrt{q}+\frac{(q+\sqrt{q})S(a)}{l^{m}})$
and $n=\frac{1}{2}q\big(q+1+\frac{q-1}{l^m}S(a)\big)-1.$

\begin{table}[ht]
\centering
\caption{The weight distribution of the codes of Theorem 1.}
\begin{tabular}{|c|c|}
\hline
% after \\: \hline or \cline{col1-col2} \cline{col3-col4} ...
\textrm{Weight} $\omega$ \qquad& \textrm{Multiplicity} $A_\omega$   \\
\hline
0 \qquad&   1  \\
\hline
$\frac{1}{4}q(q+1+\frac{(q-1)S(a)}{l^{m}})$ \qquad&  $q(q-1)$  \\
\hline
$\frac{1}{4}q(q+\sqrt{q}+\frac{(q+\sqrt{q})S(a)}{l^{m}})$  \qquad& $\frac{1}{2}(q-1)(1+\frac{S(a)}{l^{m}})$  \\
\hline
$\frac{1}{4}q(q-\sqrt{q}+\frac{(q+\sqrt{q})S(a)}{l^{m}})$  \qquad& $\frac{1}{2}(q-1)(1-\frac{S(a)}{l^{m}})$  \\
\hline
\end{tabular}
\end{table}
\end{theorem}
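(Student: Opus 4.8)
The plan is to read the three nonzero weights straight off Lemma \ref{lem:wt}(1) and then count how many codewords realize each value. By Remark \ref{rem:dim} the map $(u,v)\mapsto \c_{(u,v)}$ is an $\F_2$-isomorphism $\F_q^2\to C_{\D_{(a,0)}}$, so the code has $q^2$ codewords and dimension $2\phi(l^m)$, and it suffices to tabulate $\wt(\c_{(u,v)})$ over the $q^2-1$ pairs $(u,v)\neq(0,0)$. First I would substitute $S(a,0)=\frac{q-1}{l^m}S(a)$ (the $b=0$ clause of Lemma \ref{lem:S(a,b)}) into Lemma \ref{lem:wt}(1). The $q(q-1)$ pairs with $v\neq0$ then all produce the single weight $\frac14 q\big(q+1+\frac{(q-1)S(a)}{l^m}\big)$, which is the second row of Table 1. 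For the remaining $q-1$ pairs $(u,0)$ with $u\neq0$, I would insert the $b\neq0$ clause of Lemma \ref{lem:S(a,b)}, namely $S(a,u)=(-1)^{\wt(\c^{(0)})}\sqrt q-\frac{\sqrt q+1}{l^m}S(a)$ with $\c=au^{-\frac{q-1}{l^m}}$, to obtain $\wt(\c_{(u,0)})=\frac14 q\big(q-(-1)^{\wt(\c^{(0)})}\sqrt q+\frac{(q+\sqrt q)S(a)}{l^m}\big)$. This takes exactly the two values in the last two rows of Table 1 according to the parity of $\wt(\c^{(0)})$, that is, according to whether $u\in E_a$ or $u\in O_a$.

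The heart of the argument is therefore the evaluation of $|E_a|$ and $|O_a|$, and the bridge I would establish is the identity $\chi_1(x)=(-1)^{\wt(\x^{(0)})}$ for every $x\in\F_q$, equivalently $\Tr(x)\equiv\wt(\x^{(0)})\pmod 2$. Both sides are $\F_2$-linear in the coordinate vector $\x=(a_1,\dots,a_{\phi(l^m)})$, so it is enough to check them on the basis $\{\alpha,\alpha^2,\dots,\alpha^{\phi(l^m)}\}$: since $\alpha^j$ is the $j$-th basis vector, $(\alpha^j)^{(0)}$ is nonzero (and then of weight one) exactly when $j\in\{l^{m-1},2l^{m-1},\dots,(l-1)l^{m-1}\}$, so I must show $\Tr(\alpha^j)=1$ for precisely these $j$ and $\Tr(\alpha^j)=0$ otherwise. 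Writing $j=l^s j_0$ with $\gcd(j_0,l)=1$, the element $\alpha^j$ is a primitive $l^{m-s}$-th root of unity whose Frobenius (squaring) conjugates are all the primitive $l^{m-s}$-th roots, so the trace collapses to $l^s$ times the sum of all primitive $l^{m-s}$-th roots; that sum is $0$ for $m-s\geq 2$ and $-1$ for $m-s=1$, whence (as $l^s$ is odd) $\Tr(\alpha^j)$ is $0$ unless $s=m-1$, i.e. unless $j$ is a multiple of $l^{m-1}$, in which case it is $1$.

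With this identity, $S(a)=\sum_{i=0}^{l^m-1}\chi_1(a\alpha^i)=\sum_{i=0}^{l^m-1}(-1)^{\wt((a\alpha^i)^{(0)})}$. As $u$ runs over $\F_q^*$, the power $u^{\frac{q-1}{l^m}}$ runs over the order-$l^m$ subgroup $\langle\alpha\rangle$, attaining each of $\alpha^0,\dots,\alpha^{l^m-1}$ exactly $\frac{q-1}{l^m}$ times; hence $\c=au^{-\frac{q-1}{l^m}}$ attains each coset value $a\alpha^i$ the same number of times, and $\sum_{u\in\F_q^*}(-1)^{\wt(\c^{(0)})}=\frac{q-1}{l^m}S(a)$. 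Since this sum is $|E_a|-|O_a|$ while $|E_a|+|O_a|=q-1$, I solve to get $|E_a|=\tfrac12(q-1)\big(1+\tfrac{S(a)}{l^m}\big)$ and $|O_a|=\tfrac12(q-1)\big(1-\tfrac{S(a)}{l^m}\big)$, which supply the multiplicities of the two weights in the last two rows of Table 1.

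Finally I would assemble the table, verify that the multiplicities account for all nonzero codewords ($q(q-1)+|E_a|+|O_a|=q^2-1$), and identify the minimum distance. Using the value set of Lemma \ref{lem:3} (so that $|S(a)|<l^m$), the weight $\frac14 q\big(q-\sqrt q+\frac{(q+\sqrt q)S(a)}{l^m}\big)$ is positive and strictly smaller than the other two nonzero weights: the comparison with the $v\neq0$ weight reduces to the positivity of $(1+\sqrt q)\big(1-\frac{S(a)}{l^m}\big)$, and it is visibly below the $+\sqrt q$ value. Hence it equals the claimed $d$. I expect the main obstacle to be the middle step, namely pinning down $|E_a|$ and $|O_a|$; everything hinges on the character/weight identity $\chi_1(x)=(-1)^{\wt(\x^{(0)})}$ together with the equidistribution of $au^{-\frac{q-1}{l^m}}$ over the coset $a\langle\alpha\rangle$, and once these are in place the remainder is substitution and bookkeeping.
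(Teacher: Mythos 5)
Your argument is correct in substance, but it takes a genuinely different route from the paper's. The paper pins down $A_{\omega_1}=q(q-1)$ from Lemma \ref{lem:wt}, then invokes Remark \ref{rem:dim} (dimension $2\phi(l^m)$) and Lemma \ref{lem:dis} (dual distance at least $2$) so that the first two Pless power moments give the linear system $A_{\omega_2}+A_{\omega_3}=q-1$, $\omega_1A_{\omega_1}+\omega_2A_{\omega_2}+\omega_3A_{\omega_3}=\frac{1}{2}q^2n$, which it solves for the two unknown multiplicities; the counts $|E_a|,|O_a|$ are only deduced afterwards, as Corollary \ref{cor:1}. You reverse this logical order: you compute $|E_a|,|O_a|$ directly, by proving the identity $\chi_1(x)=(-1)^{\wt(\x^{(0)})}$ (your cyclotomic argument that $\Tr(\alpha^j)=1$ exactly when $l^{m-1}\mid j$ is sound, using that $2$ remains a primitive root modulo $l^{m-s}$) together with the equidistribution of $u\mapsto u^{-\frac{q-1}{l^m}}$ over $\langle\alpha\rangle$, yielding $|E_a|-|O_a|=\frac{q-1}{l^m}S(a)$. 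This buys a self-contained proof with no power moments and no appeal to Lemma \ref{lem:dis}, and it produces Corollary \ref{cor:1} as a byproduct rather than as a consequence of the theorem; the cost is proving the trace/weight identity, which the paper never states (it is implicit in the results cited from \cite{LL18,26LL18}). A shortcut that stays inside the paper's toolkit: summing Lemma \ref{lem:S(a,b)} over all $b\in\F_q^*$ and comparing with $\sum_{b\neq0}S(a,b)=-S(a,0)$ gives $\sqrt q\,(|E_a|-|O_a|)=\frac{(q-1)\sqrt q}{l^m}S(a)$ without any basis computation.

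Two caveats. First, your pairing of weights with multiplicities, namely $\frac14q(q-\sqrt q+\frac{(q+\sqrt q)S(a)}{l^m})$ with multiplicity $|E_a|=\frac12(q-1)(1+\frac{S(a)}{l^m})$ and $\frac14q(q+\sqrt q+\frac{(q+\sqrt q)S(a)}{l^m})$ with multiplicity $|O_a|=\frac12(q-1)(1-\frac{S(a)}{l^m})$, is the \emph{opposite} of what Table 1 prints, so you cannot say your values ``supply the multiplicities in the last two rows of Table 1.'' Your pairing is in fact the correct one: it matches the solution of the linear system in the paper's own proof, matches Corollary \ref{cor:1}, and matches Example 1 (for $l^m=9$, $S(a)=5$, the weight $1536$, which is the $-\sqrt q$ value, has multiplicity $49=\frac12\cdot63\cdot\frac{14}{9}$). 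The last two multiplicities of Table 1 are simply swapped, a typo in the paper; you should flag the discrepancy rather than assert agreement. Second, your justification that the $-\sqrt q$ weight is positive is too weak: $|S(a)|<l^m$ only gives $\omega_3>-\frac{q\sqrt q}{2}$. You need the exact minimum of the value set in Lemma \ref{lem:3}, $S(a)\geq l^{m-1}(2-l)$, which gives $\omega_3\geq\frac{q\sqrt q}{2l}(\sqrt q+1-l)$; this is positive unless $\sqrt q+1=l$, and in the exceptional cases $(l,m)=(3,1)$, or $(l,m)=(5,1)$ with $S(a)=-3$, one actually has $\omega_3=0$ (e.g.\ for $l=3$, $m=1$ the word $\c_{(1,0)}$ is zero), so the stated dimension and minimum distance fail there. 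The paper shares this defect, since Remark \ref{rem:dim} asserts positivity without proof, but as you made the positivity claim explicit you should prove it with the sharp bound and note the exceptional cases.
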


\begin{proof}
Assume that $(u,v)\neq (0,0)$. By Lemma \ref{lem:S(a,b)} and Lemma \ref{lem:wt},
$\wt(\c_{(u,v)})$ has only three values, that is,
$$
\left\{\begin{array}{ll}
\omega_{1}=\frac{1}{4}q(q+1+\frac{(q-1)S(a)}{l^{m}}), \  \\
\omega_{2}=\frac{1}{4}q(q+\sqrt{q}+\frac{(q+\sqrt{q})S(a)}{l^{m}}), \  \\
\omega_{3}=\frac{1}{4}q(q-\sqrt{q}+\frac{(q+\sqrt{q})S(a)}{l^{m}}).
\end{array}
\right.
$$

Recall that $A_{\omega_{i}}$ is the multiplicity of $\omega_{i}$. By Lemma~\ref{lem:wt}, we have $A_{\omega_{1}}=q(q-1)$.
By Lemma~\ref{lem:dis} and the first two Pless Power Moment (\cite[P. 260]{12HP03} ),
we obtain the system of linear equations as follows:
$$
\left\{\begin{array}{ll}
A_{\omega_{1}}=q(q-1), \  \\
A_{\omega_{2}}+A_{\omega_{3}}=q-1, \  \\
\omega_{1}A_{\omega_{1}}+\omega_{2}A_{\omega_{2}}+\omega_{3}A_{\omega_{3}}=\frac{1}{2}q^{2}n.
\end{array}
\right.
$$
Solving the system , we get
$$
\left\{\begin{array}{ll}
A_{\omega_{1}}=q(q-1), \  \\
A_{\omega_{2}}=\frac{1}{2}(q-1)(1-\frac{S(a)}{l^{m}}), \  \\
A_{\omega_{3}}=\frac{1}{2}(q-1)(1+\frac{S(a)}{l^{m}}).
\end{array}
\right.
$$
Then we get the weight distribution of Table 1.
By Lemma~\ref{lem:3}, of all the non-zero weights, $\omega_{3}$ is the smallest. So we get the value of minimum Hamming weight $d=\omega_{3}$.
We complete the proof.
\end{proof}

\begin{corollary}\label{cor:1}
 For each $a\in \mathbb{F}_{q}^{\ast}, $ we have $|E_{a}|=\frac{1}{2}(q-1)(1+\frac{S(a)}{l^{m}})$ and $|O_{a}|=\frac{1}{2}(q-1)(1-\frac{S(a)}{l^{m}})$.
\end{corollary}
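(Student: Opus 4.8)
The idea is that the sets $E_a$ and $O_a$ are defined exactly by the sign $(-1)^{\wt((au^{-\frac{q-1}{l^{m}}})^{(0)})}$ that appears in the evaluation of $S(a,u)$ in Lemma \ref{lem:S(a,b)}, and that this same sign controls the weights computed in Lemma \ref{lem:wt}(1). So the first step is to make this connection explicit. Fix $u\in\F_q^{*}$, put $c=au^{-\frac{q-1}{l^{m}}}$, and substitute $S(a,0)=\frac{q-1}{l^{m}}S(a)$ together with $S(a,u)=(-1)^{\wt(c^{(0)})}\sqrt{q}-\frac{\sqrt{q}+1}{l^{m}}S(a)$ into the expression $\wt(\c_{(u,0)})=\frac{1}{4} q\big(q+S(a,0)-S(a,u)\big)$ from Lemma \ref{lem:wt}(1). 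The two $S(a)$-terms combine, giving
\[
\wt(\c_{(u,0)})=\frac{1}{4} q\Big(q-(-1)^{\wt(c^{(0)})}\sqrt{q}+\frac{(q+\sqrt{q})S(a)}{l^{m}}\Big),
\]
so that $\wt(\c_{(u,0)})=\omega_{3}$ when $\wt(c^{(0)})$ is even (that is, $u\in E_a$) and $\wt(\c_{(u,0)})=\omega_{2}$ when $\wt(c^{(0)})$ is odd (that is, $u\in O_a$).

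Granting that $\omega_{1},\omega_{2},\omega_{3}$ are pairwise distinct, this already finishes the count. By Remark \ref{rem:dim} the map $(u,v)\mapsto\c_{(u,v)}$ is a bijection onto $C_{\D_{(a,0)}}$, and by Lemma \ref{lem:wt}(1) every codeword with $v\neq0$ has weight $\omega_{1}$; hence the codewords of weight $\omega_{2}$ or $\omega_{3}$ are precisely the $\c_{(u,0)}$ with $u\neq0$, each occurring once. The previous paragraph therefore gives $|E_a|=A_{\omega_{3}}$ and $|O_a|=A_{\omega_{2}}$, and inserting the multiplicities found in the proof of Theorem \ref{thm:1} yields the two asserted cardinalities.

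The delicate point is exactly this pairwise distinctness: $\omega_{2}\neq\omega_{3}$ is automatic and $\omega_{1}\neq\omega_{3}$ follows from $S(a)<l^{m}$ (Lemma \ref{lem:3}, since $\omega_1-\omega_3=\frac{1}{4}q(1+\sqrt q)(1-\frac{S(a)}{l^m})$), but $\omega_{1}=\omega_{2}$ can collapse in small degenerate cases, so I would not rely on it. The clean, case-free route I would actually write down sums the sign directly. Rewriting Lemma \ref{lem:S(a,b)} as $(-1)^{\wt(c^{(0)})}=\frac{1}{\sqrt{q}}\big(S(a,u)+\frac{\sqrt{q}+1}{l^{m}}S(a)\big)$ and summing over $u\in\F_q^{*}$ gives
\[
|E_a|-|O_a|=\frac{1}{\sqrt{q}}\Big(\sum_{u\in\F_q^{*}}S(a,u)+(q-1)\frac{\sqrt{q}+1}{l^{m}}S(a)\Big).
\]
Since $\sum_{u\in\F_q}S(a,u)=\sum_{x\in\F_q^{*}}\chi_1(ax^{\frac{q-1}{l^{m}}})\sum_{u\in\F_q}\chi_1(ux)=0$, the inner sum reduces to $-S(a,0)=-\frac{q-1}{l^{m}}S(a)$, and the whole expression simplifies to $|E_a|-|O_a|=\frac{(q-1)S(a)}{l^{m}}$. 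Combining this with $|E_a|+|O_a|=q-1$ solves for both cardinalities and reproduces the statement; this final simplification is the step I expect to be the crux of a fully rigorous write-up.
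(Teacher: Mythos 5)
Your proposal is correct, and the argument you actually commit to is genuinely different from the paper's. The paper's proof of Corollary \ref{cor:1} is exactly your first paragraph: it notes $|E_a|+|O_a|=q-1$, matches the parity of $\wt\big((au^{-\frac{q-1}{l^{m}}})^{(0)}\big)$ to the weight $\omega_3$ versus $\omega_2$ of $\c_{(u,0)}$, and then reads $|E_a|=A_{\omega_3}$, $|O_a|=A_{\omega_2}$ off Theorem \ref{thm:1}, with no comment on possible coincidence of weights. Your caution about $\omega_1=\omega_2$ is not hypothetical: one checks that $\omega_1=\omega_2$ forces $l^m=\sqrt q+1$ and $S(a)=2-l^m$, and by Lemma \ref{lem:3} this is realized for $(l,m)=(3,1)$ (every $a$) and for $(l,m)=(5,1)$ (e.g.\ $a=1$, where $S(a)=-3$, the same parameters underlying the paper's Example 3); in those cases one also gets $\omega_3=0$, so the codewords indexed by $E_a$ are the zero word, Remark \ref{rem:dim} and Theorem \ref{thm:1} fail as stated, and the identification of $|E_a|,|O_a|$ with codeword multiplicities genuinely breaks. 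Your second argument sidesteps all of this: solving Lemma \ref{lem:S(a,b)} for the sign, summing over $u\in\F_q^{*}$, and using $\sum_{u\in\F_q}S(a,u)=0$ to get $|E_a|-|O_a|=\frac{(q-1)S(a)}{l^{m}}$ uses only character orthogonality plus Lemma \ref{lem:S(a,b)}, is uniform in $(l,m,a)$, and proves the corollary even in the degenerate cases where the paper's own proof has a gap. So the trade-off is: the paper gets a one-line corollary of Theorem \ref{thm:1} but silently inherits a non-degeneracy assumption; your direct summation is self-contained and case-free (and, incidentally, could be used to decouple Corollary \ref{cor:1} from the Pless-moment machinery, which is desirable since Theorem \ref{thm:2} in turn cites the corollary). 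One further detail you navigated correctly: Table 1 lists the multiplicities of $\omega_2$ and $\omega_3$ interchanged relative to those derived in the proof of Theorem \ref{thm:1}; the proof's values are the correct ones (Example 1 confirms $A_{1792}=14$, $A_{1536}=49$ with $S(a)=5$), and since you took the multiplicities from the proof rather than the table, your first paragraph also lands on the right formulas.
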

\begin{proof}  Recall the definitions of $E_{a}$ and $O_{a}$ in Subsection~\ref{subsec:3.1}. Obviously, $|E_{a}|+|O_{a}|=q-1$. By Lemma~\ref{lem:S(a,b)},
$$
S(a,u)=(-1)^{\wt((a u^{-\frac{q-1}{l^{m}}})^{(0)})}\sqrt{q}-\frac{\sqrt{q}+1}{l^{m}}S(a).
$$
For a code word $\c_{(u,v)} \in C_{\D_{(a,0)}}$, by the computing of $|N(u,v)|$, we know that $\wt(\c_{(u,v)})=\frac{1}{4}q(q-\sqrt{q}+\frac{(q+\sqrt{q})S(a)}{l^{m}})$
if and only if $2|\wt((a u^{-\frac{q-1}{l^{m}}})^{(0)})$. So the desired result follows from Theorem~\ref{thm:1}. The proof is completed.
\end{proof}

\begin{remark}
 By Lemma~\ref{lem:3}, we know that neither $E_{a}$ nor $O_{a}$ is empty for every $a\in \mathbb{F}_{q}^{\ast}$.
\end{remark}

\begin{example}
Let $(l,m,a,b)=(3,2,1,0).$ Then, the corresponding code $C_{\D_{(1,0)}}$ has parameters $[ 3199,12,1536]$ and weight enumerator
$1+49x^{1536}+4032x^{1600}+14x^{1792}$.
\end{example}

\begin{theorem}\label{thm:2}
Let $a, b\in\F_q^*$.
The code $C_{\D_{(a,b)}}$ in \eqref{defcode} is a $[\frac{1}{2}q^{2}-1,2\phi(l^{m})]$ binary linear code
with the weight distribution in Table 2.
\begin{table}[ht]
\centering
\caption{The weight distribution of the codes of Theorem 2.}
\begin{tabular}{|c|c|}
\hline
% after \\: \hline or \cline{col1-col2} \cline{col3-col4} ...
\textrm{Weight} $\omega$ \qquad& \textrm{Multiplicity} $A_\omega$   \\
\hline
0 \qquad&   1  \\
\hline
$\frac{1}{4}q(q-1-\frac{(q-1)S(a)}{l^{m}})$ \qquad&  $1$  \\
\hline
$\frac{1}{4}q(q-1-\sqrt{q}+\frac{(1+\sqrt{q})S(a)}{l^{m}})$  \qquad& $\frac{1}{2}(q-1)(1+\frac{S(a)}{l^{m}})$  \\
\hline
$\frac{1}{4}q(q-1+\sqrt{q}+\frac{(1+\sqrt{q})S(a)}{l^{m}})$  \qquad& $\frac{1}{2}(q-1)(1-\frac{S(a)}{l^{m}})$  \\
\hline
$\frac{1}{4}q^{2}$  \qquad& $q^{2}-q-1$  \\
\hline
\end{tabular}
\end{table}
\end{theorem}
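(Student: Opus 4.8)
The plan is to read the weight distribution directly off the Hamming-weight formula of Lemma~\ref{lem:wt}(2), substitute the explicit evaluation of $S(a,u)$ from Lemma~\ref{lem:S(a,b)}, and then carry out the multiplicity bookkeeping using the cardinalities $|E_a|$ and $|O_a|$ from Corollary~\ref{cor:1}. All the analytic content is already contained in those prior results, so what remains is essentially a case split together with a counting check, where the isomorphism $(u,v)\mapsto\c_{(u,v)}$ of Remark~\ref{rem:dim} lets me count codewords by counting index pairs.

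First I would record the ambient parameters. The length $n=\frac{1}{2}q^{2}-1$ is exactly the $b\neq 0$ branch of Lemma~\ref{lem:n}, and the dimension $2\phi(l^{m})$ is Remark~\ref{rem:dim}. The zero codeword $\c_{(0,0)}$ supplies the weight-$0$ row. For $(u,v)\neq(0,0)$, Lemma~\ref{lem:wt}(2) leaves only two weight expressions: the value $\frac{1}{4}q^{2}$ when $v\neq b$, and the value $\frac{1}{4}q\big(q-1-S(a,u)\big)$ when $v=b$. For the latter I substitute Lemma~\ref{lem:S(a,b)}. If $u=0$ then $S(a,0)=\frac{q-1}{l^{m}}S(a)$, which yields the second row with the single representative $(0,b)$. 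If $u\neq 0$ then $S(a,u)=(-1)^{\wt((au^{-\frac{q-1}{l^{m}}})^{(0)})}\sqrt q-\frac{\sqrt q+1}{l^{m}}S(a)$; the sign $+1$ occurs precisely for $u\in E_a$ and produces the third row, while the sign $-1$ occurs for $u\in O_a$ and produces the fourth row.

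Finally I would settle the multiplicities by partitioning the $q^{2}$ index pairs $(u,v)$. The locus $v=b$ contributes $q$ codewords, one for each $u\in\F_q$, distributed as $1+|E_a|+|O_a|=1+(q-1)=q$ by Corollary~\ref{cor:1}; this confirms rows two through four. The complementary locus $v\neq b$ gives weight $\frac{1}{4}q^{2}$ for every $u\in\F_q$, that is $q(q-1)$ pairs, from which I subtract the single zero codeword $(0,0)$ (legitimately in this class, since $b\neq 0$ forces $0\neq b$), leaving $q^{2}-q-1$ and hence the last row.

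The main ``obstacle'' here is purely combinatorial rather than analytic: one must charge the excluded zero word to the correct ($v\neq b$) class so that the count $q^{2}-q-1$ comes out exact, and then verify that the four nonzero multiplicities sum to $q^{2}-1=2^{2\phi(l^{m})}-1$, which certifies that the case split is exhaustive. Since the evaluations of $S(a,u)$ and the counts $|E_a|,|O_a|$ are supplied by the cited lemmas and corollary, the proof reduces to this verification.
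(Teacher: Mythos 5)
Your proposal is correct and follows essentially the same route as the paper's proof: weight values from Lemma~\ref{lem:wt}(2) combined with the evaluation of $S(a,u)$ in Lemma~\ref{lem:S(a,b)}, and multiplicities from Corollary~\ref{cor:1} together with the bijection $(u,v)\mapsto\c_{(u,v)}$ of Remark~\ref{rem:dim}. The only difference is that the paper states the multiplicity count tersely (``it is easy to get''), whereas you spell out the partition into the loci $v=b$ and $v\neq b$ and the charging of the zero codeword to the $v\neq b$ class, which is exactly the bookkeeping the paper leaves implicit.
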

\begin{proof}
Assume that $(u,v)\neq (0,0)$. By Lemma \ref{lem:S(a,b)} and Lemma \ref{lem:wt},
$\wt(\c_{(u,v)})$ has only four values, that is,
$$
\left\{\begin{array}{ll}
\omega_{1}=\frac{1}{4}q(q-1-\frac{(q-1)S(a)}{l^{m}}), \  \\
\omega_{2}=\frac{1}{4}q(q-1-\sqrt{q}+\frac{(1+\sqrt{q})S(a)}{l^{m}}), \  \\
\omega_{3}=\frac{1}{4}q(q-1+\sqrt{q}+\frac{(1+\sqrt{q})S(a)}{l^{m}}), \ \\
\omega_{4}=\frac{1}{4}q^{2}.
\end{array}
\right.
$$

By the computation of $|N(u,v)|$ in Lemma~\ref{lem:wt} and Corollary~\ref{cor:1}, it is easy to get the multiplicity $A_{\omega_{i}}$ of $\omega_{i}.$ They are listed as follows.
$$
\left\{\begin{array}{ll}
A_{\omega_{1}}=1, \  \\
A_{\omega_{2}}=\frac{1}{2}(q-1)(1+\frac{S(a)}{l^{m}}), \  \\
A_{\omega_{3}}=\frac{1}{2}(q-1)(1-\frac{S(a)}{l^{m}}), \\
A_{\omega_{4}}=q^{2}-q-1.
\end{array}
\right.
$$

Then we get the weight distribution of Table 2.
The proof is finished.
\end{proof}

\begin{example}
Let $(l,m,a,b)=(3,2,1,1)$. Then, the corresponding code $C_{\D_{(a,b)}}$ has parameters $[ 2047,12,448]$ and weight enumerator
$1+x^{448}+49x^{960}+14x^{1216}+4031x^{1024}$.
\end{example}

\begin{example}
Let $(l,m,a,b)=(5,1,1,1)$. Then, the corresponding code $C_{\D_{(a,b)}}$ has parameters $[ 127,8,32]$ and weight enumerator
$1+x^{96}+251x^{64}+3x^{32}$.
\end{example}

\begin{remark}  In Example~3, when $(l,m,a,b)=(5,1,1,1)$, the last two weights of $C_{\D_{(a,b)}}$ equal. So it is a $3$-weight code.
\end{remark}

\subsection{Weight hierarchy of $C_{\D_{(a,b)}}$ in \eqref{defcode}}

Let $H_{r}$ be an $r$-dimensional subspace of $\mathbb{F}_{q}^{2}$ and $\beta_{1},\beta_{2},\cdots,\beta_{r}$ be an $\mathbb{F}_{2}$-basis of $H_{r}$. We set
$$
N(H_{r})=\{\x=(x,y)\in \mathbb{F}_{q}^{2}: \mathrm{Tr}(ax^{\frac{q-1}{l^{m}}}+by)=0, \mathrm{Tr}(\x\cdot \beta_{j})=0, 1\leq j\leq r\}.
$$
Then, $N(H_{r}) = (\D_{(a,b)}\cap H_r^\perp)\cup\{(0,0)\}$, which concludes that $|N(H_{r})|=|\D_{(a,b)}\cap H_r^\perp|+1$.

So, by Remark~\ref{rem:dim} and the proof of proposition~\ref{pro:d_r}, we have
%\begin{small}
\begin{align}
    &d_{r}(C_{\D_{(a,b)}})\nonumber\\
    &= n-\max\Big\{|\D_{(a,b)}\cap H_{2\phi(l^m)-r}|: H_{2\phi(l^m)-r} \in [\mathbb{F}_{q}^{2},2\phi(l^m)-r]_{p}\Big\}\label{eq:4}\\
    &=n-\max\Big\{|N(H_{r})|: H_{r} \in [\mathbb{F}_{q}^{2},r]_{p}\Big\}+1.
\end{align}
%\end{small}
\begin{proposition} \label{pro:d_r:2}
Define $B_{H_{r}}=\sum_{(x,y)\in \F_{q}^{2}}\sum_{\beta\in H_{r}}(-1)^{\Tr(\beta\cdot (x,y)+ax^{\frac{q-1}{l^{m}}}+by)}$. Then,
\begin{align*}
d_{r}(C_{\D_{(a,b)}})= n-\frac{1}{2^{r+1}}q^2-\frac{1}{2^{r+1}}\max\Big\{B_{H_{r}}: H_{r} \in [\mathbb{F}_{q}^{2},r]_{p}\Big\}+1.
\end{align*}
\end{proposition}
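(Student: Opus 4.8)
The plan is to compute $|N(H_r)|$ explicitly as a character sum and then feed it into the identity $d_{r}(C_{\D_{(a,b)}}) = n - \max\{|N(H_r)| : H_r \in [\F_q^2,r]_p\}+1$ established just above the statement. First I would encode each of the $r+1$ defining linear conditions of $N(H_r)$ using the orthogonality relation $\mathbf{1}[\Tr(w)=0] = \tfrac{1}{2}\sum_{z\in\F_2}(-1)^{\Tr(zw)} = \tfrac{1}{2}\big(1+(-1)^{\Tr(w)}\big)$, so that
$$|N(H_r)| = \sum_{(x,y)\in\F_q^2}\frac{1}{2}\big(1+(-1)^{\Tr(ax^{\frac{q-1}{l^{m}}}+by)}\big)\prod_{j=1}^{r}\frac{1}{2}\big(1+(-1)^{\Tr((x,y)\cdot\beta_j)}\big).$$

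The key step is to collapse the product over the basis $\beta_1,\dots,\beta_r$ into a single sum over the whole subspace $H_r$. Expanding $\prod_{j=1}^{r}\big(1+(-1)^{\Tr((x,y)\cdot\beta_j)}\big)$ produces $\sum_{S\subseteq\{1,\dots,r\}}(-1)^{\Tr((x,y)\cdot\sum_{j\in S}\beta_j)}$; since $\beta_1,\dots,\beta_r$ form an $\F_2$-basis of $H_r$, the partial sums $\sum_{j\in S}\beta_j$ range bijectively over $H_r$ as $S$ runs over all subsets, so the product equals $\tfrac{1}{2^{r}}\sum_{\beta\in H_r}(-1)^{\Tr((x,y)\cdot\beta)}$. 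Substituting this and separating the factor $1+(-1)^{\Tr(ax^{\frac{q-1}{l^{m}}}+by)}$ splits $|N(H_r)|$ into two double sums: one carrying no $ax^{\frac{q-1}{l^{m}}}+by$ term, and one carrying it, the latter being by definition exactly $\tfrac{1}{2^{r+1}}B_{H_r}$.

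It then remains to evaluate the first double sum $\tfrac{1}{2^{r+1}}\sum_{\beta\in H_r}\sum_{(x,y)\in\F_q^2}(-1)^{\Tr((x,y)\cdot\beta)}$. Interchanging the order of summation, the inner sum over $(x,y)$ equals $q^2$ when $\beta=0$ and vanishes otherwise, because for $\beta\neq0$ the linear form $(x,y)\mapsto\Tr((x,y)\cdot\beta)$ is balanced on $\F_q^2$ (this is the same orthogonality argument used in the proof of Proposition~\ref{pro:d_r}). Hence only $\beta=0$ contributes and the first sum equals the constant $\tfrac{1}{2^{r+1}}q^2$, giving $|N(H_r)| = \tfrac{1}{2^{r+1}}q^2 + \tfrac{1}{2^{r+1}}B_{H_r}$.

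Finally, since $\tfrac{1}{2^{r+1}}q^2$ is independent of the choice of $H_r$, I would pull it outside the maximum to obtain $\max\{|N(H_r)|\} = \tfrac{1}{2^{r+1}}q^2 + \tfrac{1}{2^{r+1}}\max\{B_{H_r}\}$, and substitution into $d_{r}(C_{\D_{(a,b)}}) = n - \max\{|N(H_r)|\}+1$ yields the stated formula. I do not anticipate a genuine obstacle, as this is a routine orthogonality computation; the only point deserving care is the basis-to-subspace identity, namely checking that $\sum_{\beta\in H_r}(-1)^{\Tr((x,y)\cdot\beta)}$ faithfully encodes the conjunction of the $r$ conditions $\Tr((x,y)\cdot\beta_j)=0$, which genuinely uses that $\{\beta_j\}$ is a basis of $H_r$ rather than an arbitrary spanning set.
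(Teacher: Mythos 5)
Your proposal is correct and follows essentially the same route as the paper's own proof: both encode the $r+1$ conditions by character orthogonality, collapse the product over the basis $\beta_1,\dots,\beta_r$ into a single sum over $H_r$, split off the term without $ax^{\frac{q-1}{l^{m}}}+by$ (which equals $q^2$ since only $\beta=(0,0)$ survives), and substitute $|N(H_r)|=\frac{1}{2^{r+1}}q^2+\frac{1}{2^{r+1}}B_{H_r}$ into $d_r(C_{\D_{(a,b)}})=n-\max\{|N(H_r)|\}+1$. The only difference is cosmetic: you make the basis-to-subspace collapse explicit via the subset expansion, while the paper states the identity directly.
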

\begin{proof}
By the orthogonal property of additive characters, we have
\begin{align*}
&2^{r+1}|N(H_{r})|\\
&=\sum_{\x=(x,y)\in \F_{q}^{2}}\Big(\sum_{z\in \F_{2}}(-1)^{\Tr(z(ax^{\frac{q-1}{l^{m}}}+by))}\prod_{i=1}^{r}\sum_{x_{i}\in \F_{2}}(-1)^{\Tr(x_{i}(\x\cdot \beta_{i}))}\Big)  \\
&=\sum_{\x=(x,y)\in \F_{q}^{2}}\Big(\sum_{z\in \F_{2}}(-1)^{\Tr(z(ax^{\frac{q-1}{l^{m}}}+by))}\sum_{\beta\in H_{r}}(-1)^{\Tr(\beta\cdot \x)}\Big)  \\
&=\sum_{\x=(x,y)\in \F_{q}^{2}}\Big(\big(1+(-1)^{\Tr(ax^{\frac{q-1}{l^{m}}}+by)}\big)\sum_{\beta\in H_{r}}(-1)^{\Tr(\beta\cdot \x)}\Big) \\
&=\sum_{\x=(x,y)\in \F_{q}^{2}}\sum_{\beta\in H_{r}}(-1)^{\Tr(\beta\cdot \x)}+\sum_{\x=(x,y)\in \F_{q}^{2}}\sum_{\beta\in H_{r}}(-1)^{\Tr(\beta\cdot \x+ax^{\frac{q-1}{l^{m}}}+by)} \\
&=q^2+\sum_{\x=(x,y)\in \F_{q}^{2}}\sum_{\beta\in H_{r}}(-1)^{\Tr(\beta\cdot \x+ax^{\frac{q-1}{l^{m}}}+by)},
\end{align*}
where the last equation comes from
\begin{align*}
\sum_{\x=(x,y)\in \F_{q}^{2}}\sum_{\beta\in H_{r}}(-1)^{\Tr(\beta\cdot \x)}
&=\sum_{\x=(x,y)\in \F_{q}^{2}} \ 1+\sum_{\x=(x,y)\in \F_{q}^{2}}\sum_{(0,0)\neq\beta\in H_{r}}(-1)^{\Tr(\beta\cdot \x)}  \\
&=q^{2}+\sum_{(0,0)\neq\beta\in H_{r}}\sum_{\x=(x,y)\in \F_{q}^{2}}(-1)^{\Tr(\beta\cdot \x)}=q^{2}.
\end{align*}
So, the desired result is obtained. Thus, we complete the proof.
\end{proof}

In the following sequel, we shall determine the weight hierarchy of $C_{\D_{(a,b)}}$ by calculating $B_{H_{r}}$ in Proposition~\ref{pro:d_r:2} and $|\D_{(a,b)}\cap H_{2\phi(l^m)-r}|$ in \eqref{eq:4}.

\begin{theorem}\label{thm:3}
Let $a\in\F_q^*$ and $C_{\D_{(a,0)}}$ defined in \eqref{defcode}. Then
$$
d_{r}(C_{\D{(a,0)}})=\left\{\begin{array}{ll}
\frac{1}{2}q\Big(1-\frac{1}{2^{r}}\Big)\Big(q-\sqrt{q}+\frac{(q+\sqrt{q})S(a)}{l^{m}}\Big), \textrm{if\ } \ 1\leq r \leq \frac{1}{2}\phi(l^{m}), \\
\frac{1}{2}q\Big(q+1+\frac{(q-1)S(a)}{l^{m}}\Big)-\frac{1}{2^r}q^{2},  \textrm{if\ } \ \frac{1}{2}\phi(l^{m}) < r \leq 2\phi(l^{m}).
\end{array}
\right.
$$
\end{theorem}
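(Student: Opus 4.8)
The plan is to make the maximum in Proposition~\ref{pro:d_r:2} completely explicit. First I would compute $B_{H_r}$ for $b=0$. Writing each $\beta\in H_r$ as $\beta=(\beta',\beta'')$ and carrying out the $y$-summation first, the factor $\sum_{y\in\F_q}(-1)^{\Tr(\beta'' y)}$ forces $\beta''=0$, so only the subspace $H_r\cap(\F_q\times\{0\})$ contributes. Let $W\subseteq\F_q$ be its image under the first projection and $t=\dim_{\F_2}W$. The surviving $x$-sum is $\sum_{x\in\F_q}(-1)^{\Tr(\beta' x+ax^{(q-1)/l^m})}=1+S(a,\beta')$, so that $B_{H_r}=q\big(2^t+\sum_{w\in W}S(a,w)\big)$. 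Inserting the two values of Lemma~\ref{lem:S(a,b)} (namely $S(a,0)=\frac{q-1}{l^m}S(a)$, and $S(a,w)=(-1)^{\wt((aw^{-(q-1)/l^m})^{(0)})}\sqrt{q}-\frac{\sqrt{q}+1}{l^m}S(a)$ for $w\neq0$) rewrites $B_{H_r}$ in terms of $t$ and of $T:=\sum_{w\in W\setminus\{0\}}(-1)^{\wt((aw^{-(q-1)/l^m})^{(0)})}=2|W\cap E_{a}|-(2^t-1)$.

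Next I would optimize over $H_r$. For fixed $t$ the expression is increasing in $T$, so the best $W$ of dimension $t$ is one with $W\setminus\{0\}\subseteq E_{a}$ (giving $T=2^t-1$), and the value collapses to $q\big[2^t(1+\sqrt{q})(1-\frac{S(a)}{l^m})-\sqrt{q}(1-\frac{(\sqrt{q}+1)S(a)}{l^m})\big]$. By Lemma~\ref{lem:3} we have $S(a)<l^m$, so the coefficient $(1+\sqrt{q})(1-\frac{S(a)}{l^m})$ of $2^t$ is positive and this value strictly increases with $t$, pushing $t$ as high as the range $0\le t\le\min(r,\phi(l^m))$ permits. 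On the other hand, interchanging the order of summation yields the dual identity $B_{H_r}=q\,2^{t}\sum_{x\in W^{\perp}}(-1)^{\Tr(ax^{(q-1)/l^m})}$, from which $B_{H_r}\le q\,2^t|W^{\perp}|=q^2$ always, with equality exactly when no nonzero $x\in W^{\perp}$ satisfies $\Tr(ax^{(q-1)/l^m})=1$. The tension between ``keep increasing in $t$'' and this ceiling $q^2$ is what produces the two branches.

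The one genuinely structural input is the existence of a $\frac12\phi(l^m)$-dimensional $\F_2$-subspace $W$ with $W\setminus\{0\}\subseteq E_{a}$; this is the main obstacle. I would settle it by arithmetic of $l^m$. Since $2$ generates the cyclic group $(\Z/l^m)^{*}$ of order $\phi(l^m)$, the element $2^{\phi(l^m)/2}$ has order $2$ and hence equals $-1$ modulo $l^m$; thus $l^m\mid\sqrt{q}+1$ and $(q-1)/l^m=(\sqrt{q}-1)\cdot\frac{\sqrt{q}+1}{l^m}$ is a multiple of $\sqrt{q}-1$. Consequently $y^{(q-1)/l^m}=1$ for every $y\in\F_{\sqrt{q}}^{*}$. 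Choosing any $u_0\in E_{a}$ (nonempty by Lemma~\ref{lem:3} and Corollary~\ref{cor:1}) and setting $W=u_0\,\F_{\sqrt{q}}$, each nonzero $w=u_0y$ satisfies $w^{-(q-1)/l^m}=u_0^{-(q-1)/l^m}$, so $\wt((aw^{-(q-1)/l^m})^{(0)})=\wt((au_0^{-(q-1)/l^m})^{(0)})$ is even and $w\in E_{a}$. As $\F_{\sqrt{q}}$ has $\F_2$-dimension $\frac12\phi(l^m)$, this $W$ works, and all its subspaces of smaller dimension lie in $E_{a}$ as well.

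Finally I would assemble the two cases. For $1\le r\le\frac12\phi(l^m)$ every admissible $t$ obeys $t\le r\le\frac12\phi(l^m)$, so the optimum is at $t=r$ with $W\setminus\{0\}\subseteq E_{a}$; feeding $\max B_{H_r}=q\big[2^{r}(1+\sqrt{q})(1-\frac{S(a)}{l^m})-\sqrt{q}(1-\frac{(\sqrt{q}+1)S(a)}{l^m})\big]$ into Proposition~\ref{pro:d_r:2} and simplifying gives the first line. For $\frac12\phi(l^m)<r\le2\phi(l^m)$ one attains the ceiling $\max B_{H_r}=q^2$ (e.g. with $t=\frac12\phi(l^m)$ and the subspace above when $r<\phi(l^m)$, and with $W=\F_q$, i.e. $t=\phi(l^m)$ and $W^{\perp}=\{0\}$, when $r\ge\phi(l^m)$), which yields $d_r=n+1-q^2/2^{r}$, the second line. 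As a consistency check, the two expressions agree at $r=\frac12\phi(l^m)$, where $2^{r}=\sqrt{q}$.
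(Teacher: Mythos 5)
Your proposal is correct. For the first branch it is essentially the paper's own proof: the same reduction of $B_{H_r}$ to the subspace $W\subseteq\F_q$ coming from $H_r\cap(\F_q\times\{0\})$, the same substitution of Lemma~\ref{lem:S(a,b)}, the same positivity argument via Lemma~\ref{lem:3} after first maximizing the sign sum $T$, and the same extremal subspace $u_0\F_{\sqrt{q}}$ with $u_0\in E_a$, justified by $l^m\mid\sqrt{q}+1$ and the nonemptiness of $E_a$ from Corollary~\ref{cor:1}. The genuine difference is the second branch. There the paper abandons Proposition~\ref{pro:d_r:2}, returns to the intersection formula \eqref{eq:4}, exhibits the $\frac{3}{2}\phi(l^m)$-dimensional subspace $x\F_{\sqrt{q}}\times\F_q$ (with $x\neq 0$, $\Tr(ax^{(q-1)/l^m})=0$) all of whose nonzero vectors lie in $\D_{(a,0)}$, and then applies the trivial bound $|\D_{(a,0)}\cap H|\le |H|-1$; you instead stay on the $B_{H_r}$ side, prove the ceiling $B_{H_r}\le q^2$ via the dual identity $B_{H_r}=q\,2^t\sum_{x\in W^\perp}(-1)^{\Tr(ax^{(q-1)/l^m})}$, and attain it with $W=u_0\F_{\sqrt{q}}$ or $W=\F_q$. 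These two case-two arguments are equivalent under the duality $2^{r+1}\big(|\D_{(a,0)}\cap H_r^\perp|+1\big)=q^2+B_{H_r}$ that underlies Proposition~\ref{pro:d_r:2}: your ceiling is exactly the paper's trivial intersection bound, read on the other side. What your route buys is a single uniform framework for both branches and a transparent explanation of the phase transition at $r=\frac12\phi(l^m)$ (growth in $t$ hitting the $q^2$ ceiling); what the paper's buys is a concrete large subspace sitting inside the defining set. One small point you should make explicit: for $\frac12\phi(l^m)<r<\phi(l^m)$ the attaining subspace must be spelled out as $H_r=(u_0\F_{\sqrt{q}}\times\{0\})\oplus(\{0\}\times U)$ with $\dim U=r-\frac12\phi(l^m)$, so that $H_r\cap(\F_q\times\{0\})$ projects exactly onto $u_0\F_{\sqrt{q}}$, which is what your formula for $B_{H_r}$ presupposes.
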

\begin{proof}
When $1 \leq r \leq \frac{1}{2}\phi(l^{m})$, by the definition of $B_{H_{r}}$, we have
\begin{align*}
&B_{H_{r}}
=\sum_{\x=(x,y)\in \F_{q}^{2}}\sum_{\beta\in H_{r}}(-1)^{\Tr(\beta\cdot \x+ax^{\frac{q-1}{l^{m}}})}  \\
&=\sum_{(x,y)\in \F_{q}^{2}}\Big(\sum_{(\beta_{1},0)\in H_{r}}(-1)^{\Tr(\beta_{1}x+ax^{\frac{q-1}{l^{m}}})}+\sum_{\substack{(\beta_{1},\beta_{2})\in H_{r}\\\beta_{2}\neq 0}}(-1)^{\Tr(\beta_{1}x+\beta_{2}y+ax^{\frac{q-1}{l^{m}}})}\Big)  \\
%&=q\sum_{(\beta_{1},0)\in H_{r}}\sum_{x\in \F_{q}}(-1)^{\Tr(\beta_{1}x+ax^{\frac{q-1}{l^{m}}})}+\sum_{\substack{(\beta_{1},\beta_{2})\in H_{r}\\\beta_{2}\neq 0} }\sum_{x\in \F_{q}}(-1)^{\Tr(\beta_{1}x+ax^{\frac{q-1}{l^{m}}})}
%\sum_{y\in \F_{q}}(-1)^{\Tr(\beta_{2}y)} \\
&=q\sum_{(\beta_{1},0)\in H_{r}}\sum_{x\in \F_{q}}(-1)^{\Tr(\beta_{1}x+ax^{\frac{q-1}{l^{m}}})}\\
&=q\sum_{(\beta_{1},0)\in H_{r}}\Big(1+S(a,\beta_{1})\Big).
\end{align*}
So, by lemma~\ref{lem:S(a,b)}, we obtain
\begin{align*}
\frac{1}{q}B_{H_{r}}
&=\sum_{(\beta_{1},0)\in H_{r}}\Big(1-\frac{(\sqrt{q}+1)S(a)}{l^{m}}\Big)+\sqrt{q}\sum_{\substack{(\beta_{1},0)\in H_{r}\\ \beta_{1}\neq 0}}(-1)^{\wt((a\beta_{1}^{-\frac{q-1}{l^{m}}})^{(0)})}   \\
&+\frac{(\sqrt{q}+q)S(a)}{l^{m}}.
\end{align*}

By Corollary~\ref{cor:1}, there is an element $\beta \in \F_{q}^{\ast}$ such that $\wt((a\beta^{-\frac{q-1}{l^{m}}})^{(0)})$ is even.
Recall that $2$ is a primitive root modulo $l^{m}$ and $q=2^{\phi(l^m)}$, we have $q \equiv 1 \mod l^{m}$, which concludes that $\sqrt{q} \equiv -1 \mod l^m$, i.e. $l^m|\sqrt{q}+1$.
Since $\sqrt{q}-1$ and $ \sqrt{q}+1$ are coprime, we have $(\beta u)^{-\frac{q-1}{l^m}}=\beta^{-\frac{q-1}{l^m}}(u^{\sqrt{q}-1})^{-\frac{\sqrt{q}+1}{l^m}} =\beta^{-\frac{q-1}{l^m}}$ for any $u\in\F_{\sqrt{q}}^*$.
Take an $r$-dimensional subspace $L_{r}$ contained in $\beta\F_{\sqrt{q}}$ and
Put $H_{r}=L_{r}\times O$, then, for any $(\beta_1,0)\in H_r$, we have $\wt((a\beta_{1}^{-\frac{q-1}{l^{m}}})^{(0)})$ is even.
Furthermore, by Lemma~\ref{lem:3}, we have $-\sqrt{q}<1-\frac{(\sqrt{q}+1)S(a)}{l^{m}}<\sqrt{q}$, which follows that $1-\frac{(\sqrt{q}+1)S(a)}{l^{m}}+\sqrt{q}>0$. Hence,
$\frac{1}{q}B_{H_{r}}$ reaches its maximum
$$
2^{r}\Big(1-\frac{(\sqrt{q}+1)S(a)}{l^{m}}+\sqrt{q}\Big)+\frac{(\sqrt{q}+q)S(a)}{l^{m}}-\sqrt{q}.
$$

So, by Proposition~\ref{pro:d_r:2}, we obtain the generalized Hamming weights $d_{r}(C_{\D_{(a,0)}})$ for $1 \leq r \leq \frac{1}{2}\phi(l^{m})$.

When $ \frac{1}{2}\phi(l^{m})< r<2 \phi(l^{m})$, we have $ 1\leq 2 \phi(l^{m})-r<\frac{3}{2}\phi(l^{m})$.
By Lemma~\ref{lem:n} and Lemma~\ref{lem:3}, there exists an element $(x,y)\in \D_{(a,0)}$ such that $x\in\F_{q}^{\ast}$,
that is, $\Tr(ax^{\frac{q-1}{l^{m}}})=0$ and $x \neq 0$. Then, for any $u\in\F_{\sqrt{q}}^*$, we have $\Tr(a(xu)^{\frac{q-1}{l^{m}}})=\Tr(ax^{\frac{q-1}{l^{m}}})=0$.
So, $x\F_{\sqrt{q}}\times\F_{q}\subset \D_{(a,0)}$.
Note that the dimension of the subspace $x\F_{\sqrt{q}}\times\F_{q}\subset\F_{q}^{2}$ is $\frac{3}{2}\phi(l^{m})$.
Let $H_{2 \phi(l^{m})-r}$ be a $(2 \phi(l^{m})-r)$-dimensional subspace of $x\F_{\sqrt{q}}\times\F_{q}$.
So,
$$
|H_{2 \phi(l^{m})-r}\cap \D_{(a,0)}|=2^{2 \phi(l^{m})-r}-1,
$$
Then,
$$
\max\{|\D_{(a,0)} \cap H|: H \in [\F_{p^{m}},2\phi(l^{m})-r]_{p}\}=2^{2 \phi(l^{m})-r}-1.
$$
By the equation \eqref{eq:d_r}, for $ \frac{1}{2}\phi(l^{m})< r<2 \phi(l^{m})$, we have
\begin{align*}
   d_{r}(C_{\D_{(a,0)}})&=n-2^{2 \phi(l^{m})-r}+1\\
&=\frac{1}{2}q\Big(q+1+\frac{(q-1)S(a)}{l^{m}}\Big)-\frac{1}{2^r}q^{2}.
\end{align*}
Thus, we complete the proof.
\end{proof}

\begin{theorem}\label{thm:4}
Let $a, b\in\F_q^*$ and $C_{\D_{(a,b)}}$ defined in \eqref{defcode}. Then,
\begin{itemize}
\item[(1)] If $ 1\leq r \leq \frac{1}{2}\phi(l^{m})$, we have
$$
d_{r}(C_{\D_{(a,b)}})=\left\{\begin{array}{ll}
\frac{1}{2}q^{2}\Big(1-\frac{1}{2^r}\Big)-\frac{q(\sqrt{q}+1)}{4}\Big(1-\frac{S(a)}{l^{m}}\Big), & \textrm{if\ } \ S(a)<0, \\
\frac{1}{2}q^{2}\Big(1-\frac{1}{2^r}\Big)-\frac{q(\sqrt{q}+1)}{4}\Big(1-\frac{S(a)}{l^{m}}\Big)&+\frac{q\sqrt{q}}{2^{r+1}}\Big(1-\frac{\sqrt{q}+1}{l^m}S(a)\Big),\\ &\textrm{if\ } \ S(a)>0.
\end{array}
\right.
$$
\item[(2)] If $\frac{1}{2}\phi(l^{m}) < r \leq 2\phi(l^{m})$, we have
$$d_{r}(C_{\D_{(a,b)}})=\frac{1}{2}q^2(1-\frac{1}{2^{r-1}}).$$
\end{itemize}
\end{theorem}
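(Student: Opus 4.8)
The plan is to split the range of $r$ exactly as in the proof of Theorem~\ref{thm:3}: for the upper range $\frac{1}{2}\phi(l^m) < r \le 2\phi(l^m)$ I would work directly with the intersection formula \eqref{eq:4}, while for the lower range $1 \le r \le \frac{1}{2}\phi(l^m)$ I would evaluate and then maximize the character sum $B_{H_r}$ of Proposition~\ref{pro:d_r:2}. Throughout I write $\mu:=1-\frac{(\sqrt q+1)S(a)}{l^m}$, and I record the two facts I will lean on: the bound $-\sqrt q<\mu<\sqrt q$ coming from Lemma~\ref{lem:3}, and the divisibility $l^m\mid\sqrt q+1$ already established in the proof of Theorem~\ref{thm:3}.

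For the upper range, note that $\lvert \D_{(a,b)} \cap H\rvert \le 2^{\dim H}-1$ for every subspace $H$, since $\D_{(a,b)} \subseteq \F_q^2\setminus\{(0,0)\}$; so it suffices to exhibit one subspace of dimension $2\phi(l^m)-r$ lying inside $\D_{(a,b)}\cup\{(0,0)\}$. By Lemma~\ref{lem:n} and Lemma~\ref{lem:3} there is $x_0\in\F_q^*$ with $\Tr(ax_0^{(q-1)/l^m})=0$; since $\Tr(a(x_0u)^{(q-1)/l^m})=\Tr(ax_0^{(q-1)/l^m})=0$ for all $u\in\F_{\sqrt q}^*$ and $b\neq0$, the subspace $x_0\F_{\sqrt q}\times\ker(\Tr(b\,\cdot))$ has dimension $\frac{1}{2}\phi(l^m)+(\phi(l^m)-1)=\frac{3}{2}\phi(l^m)-1$ and is contained in $\D_{(a,b)}\cup\{(0,0)\}$. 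As $2\phi(l^m)-r<\frac{3}{2}\phi(l^m)$ in this range, any subspace of the right dimension inside it achieves the bound, giving $\max\lvert\D_{(a,b)}\cap H\rvert=2^{2\phi(l^m)-r}-1$ and hence $d_r=\frac{1}{2}q^2\big(1-\frac{1}{2^{r-1}}\big)$ after substituting $n=\frac{1}{2}q^2-1$.

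For the lower range, summing over $y$ first collapses $B_{H_r}$ to $q\sum_{(\beta_1,b)\in H_r}\big(1+S(a,\beta_1)\big)$, so only the elements of $H_r$ with second coordinate $b$ contribute; these form a coset of $V:=\{\beta_1:(\beta_1,0)\in H_r\}$ of size $2^{r-t}$, where $t=\dim(\pi_2 H_r)\ge 1$. Using Lemma~\ref{lem:S(a,b)}, each summand is $1+\frac{q-1}{l^m}S(a)$ (if $\beta_1=0$), or $\sqrt q+\mu$, or $-\sqrt q+\mu$ (for $\beta_1\in E_a$, resp. $O_a$), and a short computation gives that the $\beta_1=0$ term equals $(\sqrt q+\mu)-\sqrt q\,\mu$; the bound on $\mu$ shows the even-weight value $\sqrt q+\mu$ is positive and the odd-weight value negative. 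The structural fact that $a(\beta u)^{-(q-1)/l^m}=a\beta^{-(q-1)/l^m}$ for $u\in\F_{\sqrt q}^*$ (so that every nonzero element of $\beta\F_{\sqrt q}$ with $\beta\in E_a$ lies in $E_a$) lets me realize cosets of dimension up to $\frac{1}{2}\phi(l^m)$ consisting entirely of even-weight elements, which is exactly why the range cuts off at $\frac{1}{2}\phi(l^m)$. Maximizing $B_{H_r}$ then amounts to taking $t=1$ (so $2^{r-1}$ summands, each at most $\sqrt q+\mu$ besides the possible $\beta_1=0$ term) and deciding whether to include the $\beta_1=0$ term: if $\mu>0$ I avoid it and take all summands equal to $\sqrt q+\mu$, while if $\mu<0$ the $\beta_1=0$ term is the unique largest and I include it. The link to the stated dichotomy is the arithmetic consequence of $l^m\mid\sqrt q+1$, namely $\sqrt q\ge l^m-1$ and hence $\frac{l^m}{\sqrt q+1}\le 1$; since $S(a)=l^m-4j$ is an odd integer and so never zero, this forces $S(a)>0\Rightarrow\mu\le 0$ and $S(a)<0\Rightarrow\mu>0$, converting the condition on $\mu$ into the condition on the sign of $S(a)$. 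Plugging the resulting maxima into Proposition~\ref{pro:d_r:2} and simplifying via $(\sqrt q+1)\big(1-\frac{S(a)}{l^m}\big)=\sqrt q+\mu$ yields the two displayed formulas.

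The main obstacle is the optimization of $B_{H_r}$ in the lower range: I must argue rigorously that $t=1$ is optimal (more summands beat larger individual values, using $\sqrt q+\mu>0$) and that the extremal coset can simultaneously be forced into $E_a\cup\{0\}$, which relies on the $\beta\F_{\sqrt q}$ structure and is precisely what constrains $r\le\frac{1}{2}\phi(l^m)$. The subtle point is the case split: the natural dividing line is $\mu\gtrless 0$, and it is only the divisibility $l^m\mid\sqrt q+1$ together with the integrality of $S(a)$ that collapses this to the cleaner criterion $S(a)\gtrless 0$, with the boundary $\mu=0$ harmlessly making the two formulas agree (as in Example~3).
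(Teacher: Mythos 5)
Your proposal is correct and matches the paper's proof essentially step for step: part (2) by exhibiting the subspace $x\F_{\sqrt q}\times\{y\in\F_q:\Tr(by)=0\}$ of dimension $\frac{3}{2}\phi(l^m)-1$ inside $\D_{(a,b)}\cup\{(0,0)\}$ and invoking \eqref{eq:4}, and part (1) by collapsing $B_{H_r}$ of Proposition~\ref{pro:d_r:2} to the coset $\{\beta_1:(\beta_1,b)\in H_r\}$, evaluating it via Lemma~\ref{lem:S(a,b)}, and maximizing with the same two extremal subspaces built from $L_{r-1}\subset\beta\F_{\sqrt q}$ (shifted by $\xi$ so as to avoid $(0,b)$ when $S(a)<0$, and equal to $L_{r-1}\times b\F_2$ when $S(a)>0$). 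The only deviation is that you make explicit two points the paper leaves implicit --- the upper-bound argument that $\dim \Prj_2(H_r)=1$ is optimal (more summands beat larger individual values since $\sqrt q+\mu>0$), and the deduction from $l^m\mid\sqrt q+1$ and the integrality of $S(a)$ that the sign of $\mu=1-\frac{(\sqrt q+1)S(a)}{l^m}$ is opposite to that of $S(a)$ --- so your outline is, if anything, slightly more complete than the printed proof.
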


\begin{proof}
(1) When $1 \leq r \leq \frac{1}{2}\phi(l^{m})$, by the definition of $B_{H_{r}}$, we have
\begin{align*}
B_{H_{r}}
&=\sum_{(x,y)\in \F_{q}^{2}}\sum_{(\beta_{1},\beta_{2})\in H_{r}}(-1)^{\Tr(ax^{\frac{q-1}{l^{m}}}+\beta_{1}x+\beta_{2}y+by)}   \\
&=\sum_{(\beta_{1},\beta_{2})\in H_{r}}\sum_{x\in \F_{q}}(-1)^{\Tr(ax^{\frac{q-1}{l^{m}}}+\beta_{1}x)}\sum_{y\in \F_{q}}(-1)^{\Tr(\beta_{2}y+by)}.
\end{align*}

Let $ \Prj_{2}$ be the second projection from $\F_{q}^{2}$ to $\F_{q}$ defined by $(x,y)\mapsto y$.

If $b \notin \Prj_{2}(H_{r})$, then $\sum_{y\in \F_{q}}(-1)^{\Tr(\beta_{2}y+by)} =0$ for any $(\beta_{1},\beta_{2})\in H_{r}$, which follows that $B_{H_{r}}=0$.

If $b \in \Prj_{2}(H_{r})$, by Lemma~\ref{lem:S(a,b)} again, we have
\begin{align*}
&\frac{1}{q}B_{H_{r}}
=\sum_{(\beta_{1},b)\in H_{r}}\sum_{x\in \F_{q}}(-1)^{\Tr(ax^{\frac{q-1}{l^{m}}}+\beta_{1}x)}
=\sum_{(\beta_{1},b)\in H_{r}}1+\sum_{(\beta_{1},b)\in H_{r}}S(a,\beta_{1})\\
\end{align*}
\begin{equation*}
\begin{split}
&=\left\{\begin{array}{ll}
\sum\limits_{\substack{(\beta_{1},b)\in H_{r}\\\beta_1\neq 0}}\Big(1-\frac{\sqrt{q}+1}{l^m}S(a)+(-1)^{\wt((a\beta_{1}^{-\frac{{q-1}}{l^m}})^{(0)})}\sqrt{q}\Big),
 &\textit{if $(0,b)\notin H_{r}$},\\
\sum\limits_{\substack{(\beta_{1},b)\in H_{r}\\\beta_1\neq 0}}\Big(1-\frac{\sqrt{q}+1}{l^m}S(a)+(-1)^{\wt((a\beta_{1}^{-\frac{{q-1}}{l^m}})^{(0)})}\sqrt{q}\Big)&+1+\frac{q-1}{l^m}S(a),\\
&\textit{if $(0,b)\in H_{r}$}.
\end{array}
\right.
\end{split}
\end{equation*}

By Lemma~\ref{lem:3}, we have $1-\frac{(\sqrt{q}+1)S(a)}{l^{m}}+\sqrt{q}>0$. Take an element $\beta \in \F_{q}^{\ast}$
so that $\wt((a\beta^{-\frac{q-1}{l^{m}}})^{(0)})$ is even and an $(r-1)$-dimensional space $L_{r-1}$ of $\beta\F_{\sqrt{q}}$.
Then, we have $(\beta u)^{-\frac{q-1}{l^m}}=\beta^{-\frac{q-1}{l^m}}(u^{\sqrt{q}-1})^{-\frac{\sqrt{q}+1}{l^m}} =\beta^{-\frac{q-1}{l^m}}$ for any $u\in\F_{\sqrt{q}}^*$.
So, for any non-zero element $\beta_1\in L_{r-1}$, $\wt((a\beta_1 ^{-\frac{q-1}{l^{m}}})^{(0)})$ is even.

If $S(a)<0 $, set $H_{r}= \{(u,0):u\in L_{r-1}\}\cup \{(\xi+u,b):u\in L_{r-1}\}$, where $\xi \in \beta\F_{\sqrt{q}}\backslash L_{r-1}$. Then, $(0,b)\notin H_{r}$.
In this case, $\frac{1}{q}B_{H_{r}}$ reaches its maximum
$$2^{r-1}\Big(1-\frac{(\sqrt{q}+1)S(a)}{l^{m}}+\sqrt{q}\Big).$$

If $S(a)>0 $, set $H_{r}= L_{r-1}\times b\F_{2}$. Then, $(0,b)\in H_{r}$.
In this case, $\frac{1}{q}B_{H_{r}}$ reaches its maximum
$$2^{r-1}(1-\frac{(\sqrt{q}+1)S(a)}{l^{m}}+\sqrt{q})+\frac{(\sqrt{q}+q)S(a)}{l^{m}}-\sqrt{q}.$$

So, by Proposition~\ref{pro:d_r:2}, we obtain the generalized Hamming weights $d_{r}(C_{\D_{(a,b)}})$ for $1 \leq r \leq \frac{1}{2}\phi(l^{m})$.

(2) For  $\frac{1}{2}\phi(l^{m})< r\leq 2\phi(l^{m})$, we have $1\leq2\phi(l^{m})-r\leq \frac{3}{2}\phi(l^{m})-1$.
Take an element $x\in\F_{q}^{\ast}$ such that $\Tr(ax^{\frac{q-1}{l^{m}}})=0$. Then, for any $u\in\F_{\sqrt{q}}^*$, we have $\Tr(a(xu)^{\frac{q-1}{l^{m}}})=\Tr(ax^{\frac{q-1}{l^{m}}})=0$.
Let $T_{b}=\{y\in \F_{q}: \mathrm{Tr}(by)=0\}$, which is a $(\phi(l^{m})-1)$-dimensional subspace.
So, $x\F_{\sqrt{q}}\times T_{b}\subset \D_{(a,b)}$.
Note that the dimension of the subspace $x\F_{\sqrt{q}}\times T_{b}\subset\F_{q}^{2}$ is $\frac{3}{2}\phi(l^{m}) - 1$.
Let $H_{2 \phi(l^{m})-r}$ be a $(2 \phi(l^{m})-r)$-dimensional subspace of $x\F_{\sqrt{q}}\times T_{b}$.
So,
$$
|H_{2 \phi(l^{m})-r}\cap \D_{(a,b)}|=2^{2 \phi(l^{m})-r}-1,
$$
Then,
$$
\max\{|\D_{(a,b)} \cap H|: H \in [\F_{p^{m}},2\phi(l^{m})-r]_{p}\}=2^{2 \phi(l^{m})-r}-1.
$$
By the equation \eqref{eq:d_r}, for $ \frac{1}{2}\phi(l^{m})< r<2 \phi(l^{m})$, we have
$$
d_{r}(C_{\D_{(a,b)}})=n-\frac{1}{2^r}q^2=\frac{1}{2}q^2(1-\frac{1}{2^{r-1}}).
$$
Thus, we complete the proof.
\end{proof}

\section{Concluding Remarks}
The method of constructing linear codes by defining sets can be generalized. In this paper,
we give a formula for calculating the weight hierarchies of linear codes constructed by the generalizes method.
Then we construct two classes of 3-weight or 4-weight binary linear codes based on the generalized method.
By the exponential sum theory, we give the weight distributions of these codes. Using our formula, we determine the
weight hierarchies of these codes completely.

Let $\omega_{\min}$ and $\omega_{\max}$ be the minimum and maximum nonzero weight of the linear code $C_{\D_{(a,0)}}$, respectively.
It is easy to check that
$$
 \frac{\omega_{\min}}{\omega_{\max}}> \frac{1}{2}.
$$
By the results in \cite{21YD06}, the binary linear codes $C_{\D_{(a,0)}}$ in Theorem 1
are suitable for constructing secret sharing schemes with interesting properties.

\section*{Acknowledgement}

For the research, the first author was supported by the National Science Foundation of China Grant No. 11701001
and  Anhui Provincial Natural Science Foundation No. 1908085MA02,
and the second author was supported by the National Science Foundation of China Grant No. 11701317.

\end{document}